\documentclass[journal]{IEEEtran}
\usepackage{multirow}
\usepackage{diagbox}
\usepackage{amssymb}
\usepackage{amsmath}
\usepackage{graphicx}
\usepackage{cite}
\usepackage{citesort}
\usepackage{subfigure}
\usepackage{graphicx,epstopdf}
\usepackage{epsfig}	
\usepackage{cite,graphicx,amsmath,amssymb}
\usepackage{comment}

\usepackage{amssymb}
\usepackage{amsmath}
\usepackage{cite}
\usepackage{url}
\usepackage{xcolor}
\usepackage{cite,graphicx,amsmath,amssymb}
\usepackage{subfigure}
\usepackage{citesort}
\usepackage{fancyhdr}
\usepackage{mdwmath}
\usepackage{mdwtab}
\usepackage{caption}
\usepackage{amsthm}
\usepackage{lipsum}


\newtheorem{theorem}{Theorem}

\newtheorem{lemma}{Lemma}

\newtheorem{corollary}{Corollary}

\newtheorem{remark}{Remark}  


\makeatletter
\def\ScaleIfNeeded{%
\ifdim\Gin@nat@width>\linewidth \linewidth \else \Gin@nat@width
\fi } \makeatother

\begin{document}

\title{\Huge{Exploiting Full/Half-Duplex User Relaying in NOMA Systems}}

\author{ Xinwei~Yue,~\IEEEmembership{Student Member,~IEEE,} Yuanwei\ Liu,~\IEEEmembership{Member,~IEEE,}
 Shaoli~Kang, Arumugam~Nallanathan,~\IEEEmembership{Fellow,~IEEE},
 and Zhiguo~Ding,~\IEEEmembership{Senior Member,~IEEE}

\thanks{This work was supported by National High Technology Research and Development Program of China (863 Program, 2015AA01A709). The work of Z. Ding was supported by the UK EPSRC under grant number EP/L025272/1 and by H2020-MSCA-RISE-2015 under grant number 690750.
}
\thanks{X. Yue and S. Kang are with School of Electronic and Information Engineering, Beihang university, Beijing 100191,
China. S. Kang is also with State Key Laboratory of Wireless Mobile Communications, China Academy of Telecommunications Technology(CATT), Beijing 100094, China (email: xinwei$\_$yue@buaa.edu.cn, kangshaoli@catt.cn).}
\thanks{Y. Liu and A. Nallanathan are with School of Electronic Engineering and Computer Science, Queen Mary University of London, London E1 4NS, U.K. (email: \{yuanwei.liu, a.nallanathan\}@qmul.ac.uk).}
\thanks{Z. Ding is with the Department of Electrical Engineering, Princeton University, Princeton, USA and also with the School of Computing and Communications, Lancaster University, Lancaster LA1 4YW, U.K (e-mail: z.ding@lancaster.ac.uk). Part of this work has been submitted to IEEE ICC 2017 \cite{Yue2016Non}.}
 }


\maketitle

\begin{abstract}
In this paper, a novel cooperative non-orthogonal multiple access (NOMA) system is proposed, where one near user is employed as decode-and-forward (DF) relaying switching between full-duplex (FD) and half-duplex (HD)  mode to help a far user. Two representative cooperative relaying scenarios are investigated insightfully. The \emph{first scenario} is that no direct link exists between the base station (BS) and far user. The \emph{second scenario} is that the direct link exists between the BS and far user. To characterize the performance of potential gains brought by FD NOMA in two considered scenarios, three performance metrics outage probability, ergodic rate and energy efficiency are discussed. More particularly, we derive new closed-form expressions for both exact and asymptotic outage probabilities as well as delay-limited throughput for two NOMA users. Based on the derived results, the diversity orders achieved by users are obtained. We confirm that the use of direct link overcomes zero diversity order of far NOMA user inherent to FD relaying. Additionally, we derive new closed-form expressions for asymptotic ergodic rates. Based on these, the high signal-to-noise radio (SNR) slopes of two users for FD NOMA are obtained. Simulation results demonstrate that: 1) FD NOMA is superior to HD NOMA in terms of outage probability and ergodic sum rate in the low SNR region; and 2) In delay-limited transmission mode, FD NOMA has higher energy efficiency than HD NOMA in the low SNR region; However, in delay-tolerant transmission mode, the system energy efficiency of HD NOMA exceeds FD NOMA in the high SNR region.
\end{abstract}
\begin{keywords}
{D}ecode-and-forward, full-duplex, half-duplex, non-orthogonal multiple access, user relaying
\end{keywords}
\section{Introduction}
With the rapid increasing demand of wireless networks, the requirements for efficiently exploiting  the spectrum is of great significance in new radio (NR) usage scenarios \cite{Cai2017Modulation}. To achieve higher spectral efficiency of the fifth generation (5G) mobile communication network, non-orthogonal multiple access (NOMA) has received a great deal of attention~ \cite{METIS}. Recently, several NOMA schemes have been researched in detail, such as power domain NOMA (PD-NOMA) \cite{Saito2013Non}, sparse code multiple access (SCMA) \cite{Nikopour2013Sparse}, pattern division multiple access (PDMA) \cite{ChenPattern7526461}, multiuser sharing access (MUSA) \cite{Yuan2016Multi}, etc. Generally speaking, NOMA schemes can be classified into two categories, namely power-domain NOMA\footnote{In this paper, we focus on power-domain NOMA and use NOMA to represent PD-NOMA.} and code-domain NOMA. Downlink multiuser superposition transmission (DL MUST), the special case of NOMA, has been studied for 3rd generation partnership project (3GPP) in \cite{MUST1}. The pivotal characteristic of NOMA is to allow multiple users to share the same physical resource (i.e., time/fequency/code) via different power levels. At the receiver side, the successive interference cancellation (SIC) is carried out \cite{Ding2015Application}.

So far, point-to-point NOMA has been studied extensively in \cite{Al6933459,Zhang7390209,Ding6868214,Pairing7273963,Liu2016TVT}. To evaluate the performance of uplink NOMA systems, the authors in \cite{Al6933459} proposed the uplink NOMA transmission scheme to achieve higher system rate. The expressions of outage probability and achievable sum rates for uplink NOMA were derived with a novel uplink control protocol in \cite{Zhang7390209}. Regarding downlink NOMA scenarios, authors in \cite{Ding6868214} analyzed the outage behavior and ergodic rates of NOMA networks, where multiple NOMA users are spatial randomly deployed in a disc. In \cite{Pairing7273963}, the cognitive radio inspired NOMA concept was proposed, in which the influence of user pairing with the fixed power allocation in NOMA systems was discussed. As the interplay between NOMA and cognitive radio is bidirectional, NOMA was also applied to cognitive radio networks in \cite{Liu2016TVT}. More particularly, the analytical expressions of outage probability was derived and diversity orders were characterized. Apart from the above works, a new opportunistic NOMA scheme was proposed in \cite{Tian7390804} to improve the efficiency of SIC. In \cite{Yang2016Outage}, the flexible power allocation mode was researched in terms of outage probability for hybrid NOMA systems. The quantum-assisted multiple users transmission mode for NOMA was proposed in \cite{Botsinis2016Quantum}, which utilizes the minimum bit error ratio criterion to optimize the predefined transmitted information.
The author of \cite{Choi7383267} has studied the linear MUST scheme for NOMA to maximize sum rate of the entire network.
With the emphasis on physical layer security, in \cite{Physical7812773}, authors have adopted two effective approaches, namely protected zone and artificial noise for enhancing the secrecy performance of NOMA networks with the aid of stochastic geometry.

Cooperative communication is a particularly effective approach by providing the higher diversity as well as extending the coverage of networks \cite{laneman2004cooperative}. Current NOMA research contributions in terms of cooperative communication mainly include two aspects. The \emph{first aspect} is the application of NOMA into cooperative networks \cite{Choi6708131,Kim7230246,Kim2015Capacity,Men7454773}. The coordinated two-point system with superposition coding (SC) was researched in the downlink communication in \cite{Choi6708131}. The authors in \cite{Kim7230246,Kim2015Capacity} investigated outage probability and system capacity of decode-and-forward (DF) relaying for NOMA. In \cite{Men7454773}, the outage behavior of amplify-and-forward (AF) relaying with NOMA has been discussed over Nakagami-$m$ fading channels. The \emph{second aspect} is cooperative NOMA, which was first proposed in \cite{Ding2014Cooperative}. The key idea of cooperative NOMA is to regard the near NOMA user as a DF user relaying to help far NOMA user. On the standpoint of considering energy efficiency issues, simultaneous wireless information and power transfer (SWIPT) was employed at the near NOMA user, which was regarded as DF relaying in \cite{Liu7445146SWIPT}.

Although cooperative NOMA is capable of enhancing the performance gains for far user, it results in additional bandwidth costs for the system. To avoid this issue, one promising solution is to adopt the full-duplex (FD) relay technology. FD relay receives and transmits simultaneously in the same frequency band, which is the reason why it has attracted significant interest to realize more spectrally efficient systems \cite{Ju2009Improving}. In a general case, due to the imperfect isolation or cancellation process, FD operation may suffer from residual loop self-interference (LI) which is modeled as a fading channel. With the development of signal processing and antenna technologies, relaying with FD operation is feasible \cite{Riihonen2011Mitigation}. Recently, FD relay technologies have been proposed as a promising technique for 5G networks in \cite{Zhang2015Full}. Two main types of FD relay techniques, namely FD AF relaying and FD DF relaying, have been discussed in \cite{Wang2015Outage,Exploiting7105858,Kwon2010Optimal}. The expressions for outage probability of FD AF relaying were provided in \cite{Wang2015Outage}, which considers the processing delay of relaying in practical scenarios. In \cite{Exploiting7105858}, the performance of FD AF relaying in terms of outage probability was investigated considering the direct link. The authors in \cite{Kwon2010Optimal} characterized the outage performance of FD DF relaying. It is demonstrated that the optimal duplex mode can be selected according to the outage probability. Furthermore, the operations of randomly switching between FD and HD mode were considered for enhancing spectral efficiency in \cite{Riihonen5961159Hybrid}.
\subsection{Motivations and Related Works}
While the aforementioned research contributions have laid a solid foundation with providing a good understanding of cooperative NOMA and FD relay technology, the treatises for investigating the potential benefits by integrating these two promising technologies are still in their infancy. Some related cooperative NOMA studies have been investigated in \cite{Ding2014Cooperative,Ding2016FD}. In \cite{Ding2014Cooperative}, it is demonstrated that
the maximum diversity order can be obtained for all users, but cooperative NOMA with a direct link was only considered with HD operation mode. In  \cite{Ding2016FD}, the authors investigated the performance of FD device-to-device based cooperative NOMA. However, only the outage performance of far user was analyzed. To the best of our knowledge, there is no existing work investigating the impact of the direct link for FD user relaying on the network performance, which motivates us to develop this treatise. Also, there is lack of systematic performance evaluation metrics i.e., considering ergodic rate and energy efficiency in terms of FD/HD NOMA systems. Different from \cite{Ding2014Cooperative,Ding2016FD}, we present a comprehensive investigation on adopting near user as a FD/HD relaying to improve the reliability of far user. More specifically, we attempt to explore the potential ability of user relaying in NOMA networks with identifying the following key impact factors.
\begin{itemize}
  \item Will FD NOMA relaying bring performance gains compared to HD NOMA relaying? If yes, what is the condition?
  \item What is the impact of direct link on the considered system? Will it significantly improve the network performance in terms of outage probability and throughput?
  \item Will NOMA relaying bring performance gains compared to conventional orthogonal multiple access (OMA) relaying?
  \item In delay-limited/tolerant transmission modes, what are the relationships between energy efficiency (EE) and HD/FD NOMA systems?
\end{itemize}

\subsection{Contributions}
In this paper, we propose a comprehensive NOMA user relaying system, where near user can switch between FD and HD mode according to the channel conditions. We also consider the setting of two scenarios in which the direct link exists or not between the BS and far user. Based on our proposed NOMA user relaying systems, the primary contributions of this paper are summarised as follows:

\begin{enumerate}
  \item \emph{Without direct link:} We derive the closed-form expressions of outage probability for the near user and far user, respectively. For obtaining more insights, we further derive the asymptotic outage probability of two users and obtain diversity orders at high SNR. We demonstrate that FD NOMA converges to an error floor and results in a zero diversity order. We show that FD NOMA is superior to HD NOMA in terms of outage probability in the low SNR region rather than in the high SNR region. In addition, we also obtain the diversity orders of two users for HD NOMA. Furthermore, we analyze the system throughput in delay-limited transmission according to the derived outage probability.
  \item \emph{Without direct link:} We study the ergodic rate of two users for FD/HD NOMA. To gain better insights, we derive the asymptotic ergodic rates of two users and obtain the high SNR slopes. We demonstrate that the ergodic rate of far user converges to a throughput ceiling for FD/HD NOMA in the high SNR region. Moreover, we also demonstrate that FD NOMA outperforms HD NOMA in terms of ergodic sum rate in the low SNR region.
  \item \emph{With direct link:} We first derive the closed-form expression in terms of outage probability for far user. In order to get the corresponding diversity order, we also derive the approximated outage probability of far user.  We find that the reliability of far user is improved with the help of direct link. We confirm that the use of direct link overcomes the zero diversity order of far user inherent to conventional FD relaying.
      For the near user, the diversity order is the same as that of FD relaying. Additionally, we conclude that the superiority of FD NOMA is no longer apparent with the values of LI increasing.

 \item \emph{With direct link:} We analyze the ergodic rate of far user for FD/HD NOMA. For this scenario, it is the fact that the ergodic rate of near user is invariant which is not affected by the direct link. Similarly, we also derive the approximated expressions for ergodic rate and obtain the high SNR slopes. We demonstrate that the use of direct link is incapable of assisting far user to obtain additional high SNR slope. 
 \item \emph{Energy efficiency:} We derive expressions in terms of energy efficiency for FD/HD NOMA. We conclude that FD NOMA without/with direct link have a higher energy efficiency corresponding to HD NOMA in the low SNR region for delay-limited transmission mode. However, in delay-tolerant transmission mode, the system energy efficiency of HD NOMA exceeds FD NOMA without/with direct link.
\end{enumerate}


\subsection{Organization and Notation}
The rest of the paper is organized as follows. In Section \ref{System Model}, the  system model of user relaying for FD NOMA is set up. In Section \ref{Section_III}, the analytical expressions for outage probability, diversity order and throughput of FD/HD user relaying are derived and analyzed. In Section \ref{ergodic rate}, the performance of user relaying for FD/HD NOMA are evaluated in terms of ergodic rate. Section \ref{Energy Efficiency} considers the system energy efficiency for FD/HD NOMA systems.
Analytical results and simulations are presented in Section \ref{Numerical Results}. Section \ref{Conclusion} concludes the paper.

The main notations of this paper is shown as follows:
$\mathbb{E}\{\cdot\}$ denotes expectation operation; ${f_X}\left(  \cdot  \right)$ and ${F_X}\left(  \cdot  \right)$ denote the probability density function (PDF) and the cumulative distribution function (CDF) of a random variable $X$; $ \propto $ represents ``be proportional to".

\section{System Model}\label{System Model}
\begin{figure}[t!]
    \begin{center}
        \includegraphics[width=3.4in,  height=1.6in]{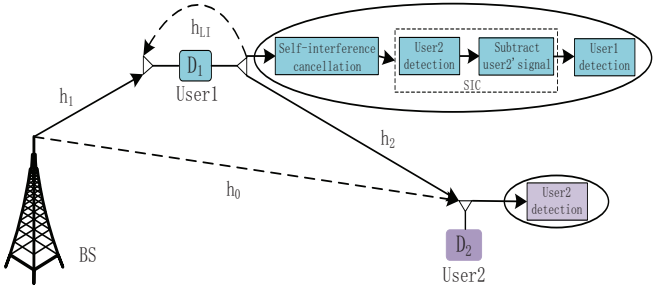}
        \caption{A downlink FD/HD cooperative NOMA system model.}
        \label{Fig. 1}
    \end{center}
\end{figure}

We consider a FD cooperative NOMA system consisting of one source, i.e, the BS, that intends to communicate with far user $D_{2}$ via the assistance of near user $D_{1}$  illustrated in Fig. 1.
$D_{1}$ is regarded as user relaying and DF protocol is employed to decode and forward the information to $D_{2}$.
To enable FD communication, $D_{1}$ is equipped with one transmit antenna and one receive antenna, while the BS and $D_{2}$ are single-antenna nodes. Note that $D_1$ can switch operation between FD and HD mode. All wireless links in network are assumed to be independent non-selective block Rayleigh fading and are disturbed by additive white Gaussian noise with mean power $N_{0}$. ${h_{1}}$, ${h_{2}}$, and ${h_{0}}$ are denoted as the complex channel coefficient of $BS \rightarrow D_{1}$, $D_{1} \rightarrow D_{2}$, and $BS \rightarrow D_{2}$ links, respectively. The channel power gains ${|h_{1}}|^2$, ${|h_{2}}|^2$ and ${|h_{0}}|^2$ are assumed to be exponentially distributed random variables (RVs) with the parameters $ \Omega_{i} $, $\emph{i}\in\{1,2,0\}$, respectively. When $D_{1}$ operates in FD mode, we assume that an imperfect self-interference cancellation scheme\footnote{LI refers to the signals that are transmitted by a FD
relaying and looped back to the receiver simultaneously. Through radio frequency (RF) cancellation, antenna cancellation and signal process technologies, etc, those LI can be suppressed to a lower level. However, LI still remains in the receiver due to imperfect self-interference cancellation, when decoding the desired signal.} is executed at $D_{1}$ such as in\cite{Exploiting7105858,Krikidis2012Full}. The LI is modeled as a Rayleigh fading channel with coefficient $h_{LI}$, and $ \Omega_{LI} $ is the corresponding average power.
To analyze HD NOMA, we introduce the switching operation factor detailed in the following.

During the $k$-th time slot, according to \cite{Ding6868214}, $D_{1}$ receives the superposed signal and loop interference signal simultaneously. The observation at $D_{1}$ is given by
\begin{align}\label{D1}
 {y_{{D_1}}}\left[ k \right] =& {h_1}(\sqrt {{a _1}{P_s}} {x_1}\left[ k \right] + \sqrt {{a _2}{P_s}} {x_2}\left[ k \right]) \nonumber \\
  &+ {h_{LI}}\sqrt {\varpi {P_r}} {x_{LI}}\left[ {k - \tau } \right] + {n_{{D_1}}}\left[ k \right],
\end{align}
where $\varpi$ is the switching operation factor between FD and HD mode. $\varpi=1$ and $\varpi=0$ denote $D_{1}$ working in FD and HD mode, respectively. Based on the practical application scenarios, we can select the different operation mode. ${x_{LI}}\left[ {k - \tau } \right]$ denotes loop interference signal and $\tau $ denotes the processing delay at $D_{1}$ with an integer $\tau  \ge 1$. More particularly, we assume that the time $k$ satisfies the relationship $k \ge \tau$. $P_{s}$ and $P_{r}$ are the normalized transmission powers at the BS and $D_{1}$, respectively. $x_1$ and $x_2$ are the signals for $D_1$ and $D_2$, respectively. ${a_{1}}$ and ${a_{2}}$ are the corresponding power allocation coefficients. To stipulate better fairness between the users, we assume that $a_{2} > a_{1}$ with $a_{1}+a_{2}=1$. The SIC\footnote{It is assumed that perfect SIC is employed at $D_1$, our future work will relax this ideal assumption.} \cite{Cover1991Elements} can be invoked by $D_1$ for first detecting $D_2$ having a larger transmit power, which has less inference signal. Then the signal of $D_2$ can be detected from the superposed signal. Therefore, the received signal-to-interference-plus-noise ratio (SINR) at $D_{1}$ to detect $D_{2}$'s message $x_{2}$ is given by
\begin{align}\label{SINR for D2 to detect D1}
{\gamma _{{D_2} \to {D_1}}} = \frac{{{{\left| {{h_1}} \right|}^2}{a_2}\rho }}{{{{\left| {{h_1}} \right|}^2}{a_1}\rho  + \varpi {{\left| {{h_{LI}}} \right|}^2}\rho  + 1}},
\end{align}
where ${\rho}=\frac{P_{s}}{N_{0}}$ is the transmit signal-to-noise radio (SNR). Note that $x_{1}$ and $x_{2}$ are supposed to be normalized unity power signals, i.e, $\mathbb{E}\{x_{1}^2\}= \mathbb{E}\{x_{2}^2\}=1$.

After SIC, the received SINR at $D_{1}$ to detect its own message $x_{1}$ is given by
\begin{align}\label{SINR for D2 at D1}
{\gamma _{{D_1}}} = \frac{{{{\left| {{h_1}} \right|}^2}{a_1}\rho }}{{\varpi {{\left| {{h_{LI}}} \right|}^2}\rho  + 1}}.
\end{align}

In the FD mode, the received signal at $D_{2}$ is written as
${y_{{D_2}}}\left[ k \right] = {h_0}(\sqrt {{a _1}{P_s}} {x_1}\left[ k \right] + \sqrt {{a _2}{P_s}} {x_2}\left[ k \right]) + {h_2}\sqrt {{P_r}} {x_2}\left[ {k - \tau } \right] + {n_{{D_2}}}\left[ k \right]$.
However, the observation at $D_{2}$ for the direct link is written as
${y_{{1,D_2}}}\left[ k \right] = {h_0}(\sqrt {{a _1}{P_s}} {x_1}\left[ k \right] + \sqrt {{a _2}{P_s}} {x_2}\left[ k \right]) + {n_{{D_2}}}\left[ k \right]$. Due to the existence of residue interference (RI) from relaying link, the received SINR at $D_{2}$ to detect $x_{2}$ for direct link is given by
\begin{align}\label{SINR for 1D2 adding interference}
\gamma _{1,{D_2}}^{RI} = \frac{{{{\left| {{h_0}} \right|}^2}{a_2}\rho }}{{{{\left| {{h_0}} \right|}^2}{a_1}\rho  + \kappa {{\left| {{h_2}} \right|}^2}\rho  + 1}},
\end{align}
where $\kappa $ denotes the impact levels of RI. Since DF relaying protocol is invoked in $D_{1}$, we assume that $D_1$ can decode and forward the signal $x_{2}$ to $D_{2}$ successfully for relaying link from $D_{1}$ to $D_{2}$. As a consequence, the observation at $D_{2}$ for relaying link is written as ${y_{2,{D_2}}}\left[ k \right] = {h_2}\sqrt {{P_r}} {x_2}\left[ {k - \tau } \right] + {n_{{D_2}}}\left[ k \right]$. Similarly, considering the impact of RI from direct link, the received SINR at $D_{2}$ to $x_2$ for relaying link is given by
\begin{align}\label{SINR for 2D2 adding interference}
\gamma _{2,{D_2}}^{RI} = \frac{{{{\left| {{h_2}} \right|}^2}\rho }}{{\kappa {{\left| {{h_0}} \right|}^2}\rho  + 1}}.
\end{align}

As stated in \cite{Riihonen5961159Hybrid,Exploiting7105858}, the relaying link corresponding to direct link from BS to $D_{2}$ has small time delay for any transmitted signals. In other words, there is some temporal separation between the signal from $D_1$ and BS. To derive the theoretical results for practical NOMA systems, we assume that these signals from $D_1$ and BS are fully resolvable by $D_2$ \cite{Ding2016FD}. Hence, we provide the upper bounds of \eqref{SINR for 1D2 adding interference} and \eqref{SINR for 2D2 adding interference} in the following parts, which are the received SINRs at $D_{2}$ to detect $x_{2}$ for direct link and relaying link, i.e.
\begin{align}\label{SINR for 1D2}
{\gamma_{1,D_{2}}} = \frac{{{\left| {{h_{0}}} \right|}^2{a_{2}}{\rho}}}{{ {\left| {{h_{0}}} \right|}^2{a_{1}}{\rho} + 1}},
\end{align}
and
\begin{align}\label{SINR for 2D2}
{\gamma_{2,D_{2}}} = {{\left| {{h_{2}}} \right|}^2{\rho}},
\end{align}
respectively. At this moment, the signals from the relaying link and direct link are combined by maximal ratio combining~ (MRC) at $D_{2}$. So the received SINR after MRC at $D_{2}$ is given by
\begin{align}\label{MRC SINR for D2}
{\gamma^{MRC}_{D_{2}}} = {{\left| {{h_{2}}} \right|}^2{\rho}}+\frac{{{\left| {{h_{0}}} \right|}^2{a_{2}}{\rho}}}{{ {\left| {{h_{0}}} \right|}^2{a_{1}}{\rho} + 1}} .
\end{align}

\section{Outage Probability}\label{Section_III}
When the target rate of users is determined by its quality of service (QoS), the outage probability is an important metric for performance evaluation. We will evaluate the outage performance in two representative scenarios in the following.
\subsection{User Relaying without Direct Link}
In this subsection, the first scenario is investigated in terms of outage probability.
\subsubsection{Outage Probability of $D_{1}$}
According to NOMA protocol, the complementary events of outage at $D_{1}$ can be explained as: $D_{1}$ can detect $x_{2}$ as well as its own message $x_{1}$. From the above description, the outage probability of $D_{1}$ is expressed as
\begin{align}\label{OP express for D1 without directlink}
P_{{D_1}}^{FD} = 1 - {{\mathop{\rm P}\nolimits} _{\rm{r}}}\left( {{\gamma _{{D_2} \to {D_1}}} > \gamma _{t{h_2}}^{FD},{\gamma _{{D_1}}} > \gamma _{t{h_1}}^{FD}} \right),
\end{align}
where $\varpi  = 1$. $\gamma _{t{h_1}}^{FD}=2^{R_{1}}-1$ with $R_{1}$ being the target rate at $D_{1}$ to detect $x_{1}$ and $\gamma _{t{h_2}}^{FD}=2^{R_{2}}-1$ with $R_{2}$ being the target rate at $D_{1}$ to detect $x_{2}$.

The following theorem provides the outage probability of $D_{1}$ for FD NOMA.
\begin{theorem} \label{theorem:1}
The closed-form expression for the outage probability of $D_{1}$ is given by
\begin{align}\label{OP derived for D1 without directlink}
P_{{D_1}}^{FD}{\rm{ = 1}} - \frac{{{\Omega _1}}}{{{\Omega _1} + \rho \varpi {\theta _1}{\Omega _{LI}}}}{e^{ - \frac{{{\theta _1}}}{{{\Omega _1}}}}},
\end{align}
where $\varpi=1$. ${\theta_{1}  = \max \left( {\tau_{1} ,\beta_{1} } \right)}$, ${\tau_{1}  = \frac{{{\gamma _{t{h_2}}^{FD}}}}{{\rho \left( {{a_2} - {a _1}{\gamma _{t{h_2}}^{FD}}} \right)}}}$ and ${\beta_{1}}  = \frac{{{\gamma _{t{h_1}}^{FD}}}}{{{a_1}\rho }}$.
Note \eqref{OP derived for D1 without directlink} is derived on the condition of ${a_2} > {a_1}{\gamma_{t{h_2}}^{FD}}$.
\begin{proof} By definition, ${J_1}$ denotes the complementary event at $D_{1}$ and is calculated as
\begin{align}\label{OP J1 derived for D1}
 {J_1} =& {{\rm{P}}_{\rm{r}}}\left( {{{\left| {{h_1}} \right|}^2} \ge \left( {\varpi {{\left| {{h_{LI}}} \right|}^2}\rho  + 1} \right){\theta _1}} \right) \nonumber\\
  = &\int_0^\infty  {\int_{\left( {x\varpi \rho  + 1} \right){\theta _1}}^\infty  {{f_{{{\left| {{h_{LI}}} \right|}^2}}}\left( x \right){f_{{{\left| {{h_1}} \right|}^2}}}\left( y \right)} } dxdy  \nonumber\\
  = &\frac{{{\Omega _1}}}{{{\Omega _1} + \rho \varpi {\theta _1}{\Omega _{LI}}}}{e^{ - \frac{{{\theta _1}}}{{{\Omega _1}}}}}.
\end{align}
Substituting  \eqref{OP J1 derived for D1} into \eqref{OP express for D1 without directlink}, \eqref{OP derived for D1 without directlink} can be obtained and the proof is completed.
\end{proof}
\end{theorem}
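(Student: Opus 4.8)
The plan is to begin from the definition in \eqref{OP express for D1 without directlink} and translate the two simultaneous SINR requirements into a single event on the channel power gains, after which the probability reduces to a routine integral over two independent exponential random variables. The whole argument hinges on showing that the complementary (success) event equals the event $J_1$ appearing in \eqref{OP J1 derived for D1}.

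First I would unpack the event $\{\gamma_{D_2 \to D_1} > \gamma_{th_2}^{FD}\}$. Substituting \eqref{SINR for D2 to detect D1}, multiplying out by the strictly positive denominator, and collecting the $|h_1|^2$ terms yields
\begin{align}
|h_1|^2 \rho \left( a_2 - a_1 \gamma_{th_2}^{FD} \right) > \gamma_{th_2}^{FD}\left( \varpi |h_{LI}|^2 \rho + 1 \right).
\end{align}
The sign of the coefficient $a_2 - a_1\gamma_{th_2}^{FD}$ is the crux: only under the hypothesis $a_2 > a_1\gamma_{th_2}^{FD}$ may I divide through without reversing the inequality, which is precisely why that condition is attached to the statement. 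Dividing then gives $|h_1|^2 > \tau_1\left( \varpi |h_{LI}|^2 \rho + 1 \right)$ with $\tau_1$ as defined. An entirely analogous manipulation of \eqref{SINR for D2 at D1} turns $\{\gamma_{D_1} > \gamma_{th_1}^{FD}\}$ into $|h_1|^2 > \beta_1\left( \varpi |h_{LI}|^2 \rho + 1 \right)$. Since both lower bounds carry the same nonnegative factor $\varpi |h_{LI}|^2 \rho + 1$, their intersection is controlled by the larger threshold, so the joint success event collapses to $|h_1|^2 \ge \theta_1\left( \varpi |h_{LI}|^2 \rho + 1 \right)$ with $\theta_1 = \max(\tau_1,\beta_1)$ --- exactly the event $J_1$ in \eqref{OP J1 derived for D1}.

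Next I would evaluate $J_1$ by conditioning on the loop-interference gain. Writing $|h_{LI}|^2 = x$ and using that $|h_1|^2$ is exponential with mean $\Omega_1$, the inner survival probability is $\exp\!\left( -\theta_1(\varpi x \rho + 1)/\Omega_1 \right)$. Multiplying by the exponential density of $|h_{LI}|^2$ (mean $\Omega_{LI}$) and integrating over $x \in [0,\infty)$ leaves a single elementary exponential integral: the factor $e^{-\theta_1/\Omega_1}$ pulls outside, and the remaining integral $\int_0^\infty \frac{1}{\Omega_{LI}} e^{-x\left( 1/\Omega_{LI} + \varpi\rho\theta_1/\Omega_1 \right)}\, dx$ evaluates to $\Omega_1/\left( \Omega_1 + \rho\varpi\theta_1\Omega_{LI} \right)$. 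Substituting the resulting $J_1$ into \eqref{OP express for D1 without directlink} produces \eqref{OP derived for D1 without directlink}, with $\varpi = 1$ specializing to the FD mode.

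I do not expect a genuine obstacle in this proof: the only step demanding care is the bookkeeping in the second paragraph --- tracking the sign condition $a_2 > a_1\gamma_{th_2}^{FD}$ needed to invert the inequality, and recognizing that the shared factor $\varpi|h_{LI}|^2\rho + 1$ lets the two thresholds merge through a single maximum rather than carving out a more complicated region in the $(|h_1|^2,|h_{LI}|^2)$ plane. Once the success event is written as $J_1$, the remaining integration is standard. The same derivation with $\varpi = 0$ would recover the HD counterpart.
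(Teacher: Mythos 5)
Your proof is correct and follows essentially the same route as the paper: reduce the joint success event to $|h_1|^2 \ge \theta_1\left(\varpi|h_{LI}|^2\rho+1\right)$ with $\theta_1=\max(\tau_1,\beta_1)$, then average the exponential survival probability of $|h_1|^2$ over the exponential loop-interference gain to obtain $J_1=\frac{\Omega_1}{\Omega_1+\rho\varpi\theta_1\Omega_{LI}}e^{-\theta_1/\Omega_1}$ and take the complement. In fact you spell out the one step the paper leaves implicit --- the sign condition $a_2>a_1\gamma_{th_2}^{FD}$ and the merging of the two thresholds into a single maximum --- so the argument is, if anything, more complete than the published proof.
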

\begin{corollary} \label{corollary1 no directlink for HD NOMA }
Based on \eqref{OP derived for D1 without directlink}, the outage probability of $D_{1}$ for HD NOMA with $\varpi  = 0$ is given by
\begin{align}\label{corollary1 derived for D1 no directlink for HD NOMA }
P_{{D_1}}^{HD} = 1 - {{\rm{e}}^{ - \frac{{{\theta _2}}}{{{\Omega _{\rm{1}}}}}}},
\end{align}
where $\gamma _{t{h_1}}^{HD} = {2^{2{R_1} - 1}}$ and $\gamma _{t{h_2}}^{HD} = {2^{2{R_2} - 1}}$ denote the target SNRs at $D_{1}$ to detect $x_{1}$ and $x_{2}$ with HD mode, respectively. $\theta _2  = \max ({\tau _2},{\beta _2})$, ${\beta _2} = \frac{{\gamma _{t{h_1}}^{HD}}}{{{\alpha _1}\rho }}$ and ${\tau _2} = \frac{{\gamma _{t{h_2}}^{HD}}}{{\rho ({a_2} - {a _1}\gamma _{t{h_2}}^{HD})}}$ with ${a_2} > {a_1}\gamma _{t{h_2}}^{HD}$.
\end{corollary}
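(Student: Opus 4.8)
The plan is to obtain the half-duplex result as a direct specialization of Theorem~\ref{theorem:1} rather than re-deriving the outage probability from scratch. First I would set $\varpi = 0$ in the SINR expressions \eqref{SINR for D2 to detect D1} and \eqref{SINR for D2 at D1}; this removes the residual loop-interference term $\varpi |h_{LI}|^2 \rho$, so the decoding conditions at $D_1$ no longer depend on $|h_{LI}|^2$. Feeding this into the complementary-event probability $J_1$ of \eqref{OP J1 derived for D1}, the inner integration over $|h_{LI}|^2$ integrates to unity and the region collapses to $|h_1|^2 \ge \theta$ for the relevant threshold $\theta$. The same conclusion follows by inspection of the closed form \eqref{OP derived for D1 without directlink}: substituting $\varpi=0$ makes the prefactor $\Omega_1/(\Omega_1 + \rho \varpi \theta_1 \Omega_{LI})$ reduce to $\Omega_1/\Omega_1 = 1$, leaving a single exponential.

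The second, and \emph{conceptually more important}, ingredient is the half-duplex spectral-efficiency penalty. Because HD relaying splits each transmission across two orthogonal time slots, the instantaneous rate carries a factor $1/2$, so meeting a target rate $R_i$ now demands the larger thresholds $\gamma_{t{h_1}}^{HD} = 2^{2R_1-1}$ and $\gamma_{t{h_2}}^{HD} = 2^{2R_2-1}$ rather than the FD thresholds $\gamma_{t{h_1}}^{FD} = 2^{R_1}-1$ and $\gamma_{t{h_2}}^{FD} = 2^{R_2}-1$. I would therefore re-express the two decoding requirements $\gamma_{D_2 \to D_1} > \gamma_{t{h_2}}^{HD}$ and $\gamma_{D_1} > \gamma_{t{h_1}}^{HD}$ under $\varpi=0$, obtaining $|h_1|^2 \ge \tau_2$ and $|h_1|^2 \ge \beta_2$ with $\tau_2$ and $\beta_2$ as defined in the statement, so that the joint event requires $|h_1|^2 \ge \theta_2 = \max(\tau_2,\beta_2)$. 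The constraint $a_2 > a_1 \gamma_{t{h_2}}^{HD}$ guarantees $\tau_2 > 0$, exactly mirroring the role of $a_2 > a_1 \gamma_{t{h_2}}^{FD}$ in Theorem~\ref{theorem:1}.

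Combining the two ingredients, $J_1$ reduces to $\Pr(|h_1|^2 \ge \theta_2) = e^{-\theta_2/\Omega_1}$ via the exponential CDF of $|h_1|^2$ with parameter $\Omega_1$, whence $P_{D_1}^{HD} = 1 - J_1 = 1 - e^{-\theta_2/\Omega_1}$, as claimed. The integration is immediate; the only real subtlety is bookkeeping. Merely plugging $\varpi=0$ into \eqref{OP derived for D1 without directlink} retains the FD threshold $\theta_1$, since $\varpi$ does not appear inside $\tau_1$ or $\beta_1$. Hence the threshold must be independently updated from $\theta_1$ to the HD value $\theta_2$ to reflect the doubled rate exponent. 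I expect reconciling this threshold substitution with the closed form inherited from the FD case to be the main point requiring care, whereas the probabilistic evaluation itself is routine.
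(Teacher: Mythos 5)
Your proposal is correct and follows essentially the same route as the paper: the paper presents this corollary as a direct specialization of Theorem~\ref{theorem:1}, obtained by setting $\varpi=0$ (which collapses the prefactor $\Omega_1/(\Omega_1+\rho\varpi\theta_1\Omega_{LI})$ to unity, equivalently making the $|h_{LI}|^2$ integration in \eqref{OP J1 derived for D1} trivial) while replacing the FD thresholds $(\tau_1,\beta_1,\theta_1)$ with their HD counterparts $(\tau_2,\beta_2,\theta_2)$ to account for the half-duplex pre-log factor. Your explicit flagging of the threshold substitution as a separate step beyond the $\varpi=0$ substitution is exactly the bookkeeping the paper performs implicitly in the corollary statement.
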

\subsubsection{Outage Probability of $D_{2}$}   
The outage events of $D_{2}$ can be explained as below. The first is that $D_{1}$ cannot detect $x_{2}$. The second is that $D_{2}$ cannot detect its own message $x_{2}$ on the conditions that $D_{1}$ can detect $x_{2}$ successfully. Based on these, the outage probability of $D_{2}$ is expressed as
\begin{align}\label{OP express for D2 without directlink}
P_{{D_2},nodir}^{FD} =& {{\rm{P}}_{\rm{r}}}\left( {{\gamma _{{D_2} \to {D_1}}} < {\gamma _{t{h_2}}^{FD}}} \right) \nonumber \\
  &+ {{\rm{P}}_{\rm{r}}}\left( {{\gamma _{2,{D_2}}} < {\gamma _{t{h_2}}^{FD}},{\gamma _{{D_2} \to {D_1}}} > {\gamma _{t{h_2}}^{FD}}} \right) ,
\end{align}
where $\varpi  = 1$.

The following theorem provides the outage probability of $D_{2}$ for FD NOMA.
\begin{theorem} \label{theorem:2}
The closed-form expression for the outage probability of $D_{2}$ without direct link is given by
\begin{align}\label{exect OP for D2 no directlink}
P_{{D_2},nodir}^{FD} = 1 - \frac{{{\Omega _1}}}
{{ {{\Omega _1} +  \rho \varpi \tau_{1} {\Omega _{{\text{LI}}}}} }} {e^{ - \left( \frac{\tau_{1} }
{{{\Omega _1}}}+ {\frac{{{\gamma _{t{h_2}}^{FD}}}}
{{\rho {\Omega _2} }} } \right)}},
\end{align}
where $\varpi=1$.
\begin{proof}
By definition, ${J_2}$ and $J_3$ denote the first and second outage events, respectively. The process calculated is given by
\begin{align}\label{derived process for D2 without directlink J2}
 {J_2} =& {{\rm{P}}_{\rm{r}}}\left( {{{\left| {{h_1}} \right|}^2} < \tau_{1} \left( {\varpi {{\left| {{h_{LI}}} \right|}^2}\rho  + 1} \right)} \right) \nonumber \\
  =& \int_0^\infty  {\int_0^{\tau_{1} \left( {\varpi y\rho  + 1} \right)} {{f_{{{\left| {{h_1}} \right|}^2}}}\left( x \right)} } {f_{{{\left| {{h_{LI}}} \right|}^2}}}\left( y \right)dxdy  \nonumber \\
  =& 1 - \frac{{{\Omega _1}}}{{{{\Omega _1} + \rho \varpi \tau_{1} {\Omega _{{\rm{LI}}}}} }}{e^{ - \frac{\tau_{1} }{{{\Omega _1}}}}}.
\end{align}
Applying some algebraic manipulations, $J_3$ is given by:
\begin{align}\label{derived process for D2 without directlink J3}
{J_3} = \frac{{{\Omega _1}}}{{ {{\Omega _1} + \rho \varpi \tau_{1} {\Omega _{{\rm{LI}}}}} }}{e^{ - \frac{\tau_{1} }{{{\Omega _1}}}}}\left( {1 - {e^{ - \frac{{{\gamma _{t{h_2}}^{FD}}}}{{\rho {\Omega _2} }}}}} \right).
\end{align}
Combining \eqref{derived process for D2 without directlink J2} and \eqref{derived process for D2 without directlink J3}, \eqref{exect OP for D2 no directlink} can be obtained and the proof is completed.
\end{proof}
\end{theorem}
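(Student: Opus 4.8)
The plan is to evaluate the two probabilities appearing in \eqref{OP express for D2 without directlink} separately, which I denote $J_2$ and $J_3$, and then add them. First I would handle $J_2 = {\rm P}_{\rm r}(\gamma_{{D_2} \to {D_1}} < \gamma_{t{h_2}}^{FD})$. Substituting the SINR from \eqref{SINR for D2 to detect D1} with $\varpi = 1$ and clearing the denominator yields the linear inequality $|h_1|^2 \rho(a_2 - a_1 \gamma_{t{h_2}}^{FD}) < \gamma_{t{h_2}}^{FD}(\varpi |h_{LI}|^2 \rho + 1)$. Here the stated condition $a_2 > a_1 \gamma_{t{h_2}}^{FD}$ is essential: it keeps the coefficient of $|h_1|^2$ strictly positive, so dividing through preserves the inequality direction and produces exactly $|h_1|^2 < \tau_1(\varpi |h_{LI}|^2 \rho + 1)$ with $\tau_1$ as defined in Theorem \ref{theorem:1}. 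If instead $a_2 \le a_1\gamma_{t{h_2}}^{FD}$ the event would hold with certainty, so this hypothesis is precisely what makes the closed form meaningful.

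Next I would compute $J_2$ as the iterated integral in \eqref{derived process for D2 without directlink J2}, using that $|h_1|^2$ and $|h_{LI}|^2$ are independent exponential variables with means $\Omega_1$ and $\Omega_{LI}$. Integrating the density of $|h_1|^2$ over $x$ up to the $y$-dependent limit gives $1 - e^{-\tau_1(\varpi y \rho + 1)/\Omega_1}$, and the remaining integration over $y$ against the exponential density of $|h_{LI}|^2$ reduces to the elementary integral $\int_0^\infty e^{-y(\tau_1 \varpi \rho/\Omega_1 + 1/\Omega_{LI})}\,dy$. This is the computational core of the proof and the point at which the loop-interference power enters, generating the rational prefactor $\frac{\Omega_1}{\Omega_1 + \rho\varpi\tau_1\Omega_{LI}}$ and hence the closed form stated in \eqref{derived process for D2 without directlink J2}.

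For $J_3 = {\rm P}_{\rm r}(\gamma_{2,{D_2}} < \gamma_{t{h_2}}^{FD},\, \gamma_{{D_2}\to {D_1}} > \gamma_{t{h_2}}^{FD})$, the key observation is structural: $\gamma_{2,{D_2}} = |h_2|^2\rho$ from \eqref{SINR for 2D2} depends only on $h_2$, whereas $\gamma_{{D_2} \to {D_1}}$ depends only on $(h_1, h_{LI})$. Since the three channels are mutually independent, the joint probability factors as a product of marginals. The first factor is the simple CDF ${\rm P}_{\rm r}(|h_2|^2 < \gamma_{t{h_2}}^{FD}/\rho) = 1 - e^{-\gamma_{t{h_2}}^{FD}/(\rho\Omega_2)}$, and the second factor is exactly the complement $1 - J_2$ already obtained. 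This reproduces \eqref{derived process for D2 without directlink J3} with no further integration.

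Finally I would add $J_2 + J_3$. Setting $A = \frac{\Omega_1}{\Omega_1 + \rho\varpi\tau_1\Omega_{LI}}\,e^{-\tau_1/\Omega_1}$, we have $J_2 = 1 - A$ and $J_3 = A\bigl(1 - e^{-\gamma_{t{h_2}}^{FD}/(\rho\Omega_2)}\bigr)$, so the sum telescopes: the $+A$ coming from $J_3$ cancels the $-A$ in $J_2$, leaving $1 - A\,e^{-\gamma_{t{h_2}}^{FD}/(\rho\Omega_2)}$, which is precisely \eqref{exect OP for D2 no directlink}. I anticipate no genuine obstacle; the only steps requiring care are enforcing the sign condition $a_2 > a_1\gamma_{t{h_2}}^{FD}$ when linearizing the SINR in $J_2$, and correctly invoking channel independence to factor $J_3$ rather than setting up a fresh double integral.
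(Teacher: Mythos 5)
Your proposal is correct and follows essentially the same route as the paper: the same decomposition into $J_2$ and $J_3$, the same iterated exponential integral for $J_2$, the factorization of $J_3$ by channel independence (which the paper compresses into ``applying some algebraic manipulations''), and the same telescoping sum. You additionally make explicit the role of the condition $a_2 > a_1\gamma_{th_2}^{FD}$ in linearizing the SINR inequality, which the paper assumes silently from Theorem~\ref{theorem:1}.
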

\begin{corollary} \label{corollary2 no directlink for HD NOMA }
Based on \eqref{exect OP for D2 no directlink}, the outage probability of $D_{2}$ without direct link for HD NOMA with $\varpi  = 0$ is given by
\begin{align}\label{corollary1 derived for D2 no directlink for HD NOMA }
P_{{D_2},nodir}^{HD}  = 1 - {e^{ - \frac{{{\tau _2}}}{{{\Omega _1}}} - \frac{{\gamma _{t{h_2}}^{HD}}}{{\rho {\Omega _2}}}}}.
\end{align}
\end{corollary}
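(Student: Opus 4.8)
The plan is to obtain this HD outage probability as a direct specialization of the FD result already established in Theorem \ref{theorem:2}, rather than re-deriving it from scratch. First I would return to the expression \eqref{exect OP for D2 no directlink} together with the two building blocks $J_2$ and $J_3$ appearing in its proof. The key observation is that setting the switching factor $\varpi = 0$ switches off the loop self-interference entirely: the leading factor $\frac{\Omega_1}{\Omega_1 + \rho \varpi \tau_1 \Omega_{LI}}$ collapses to unity, and inside both $J_2$ and $J_3$ the inner integration limit $\tau_1(\varpi y \rho + 1)$ no longer depends on the self-interference variable $y$, so the integral over $|h_{LI}|^2$ becomes trivial. Thus the structure of the FD derivation carries over verbatim with the self-interference contribution removed.

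Next I would account for the change in the target SNRs, which is the only substantive point. In HD operation the relay receives and transmits in two orthogonal time slots, so to support the end-to-end target rates $R_1$ and $R_2$ the per-slot spectral efficiency must double; this is precisely why $\gamma_{th_1}^{HD} = 2^{2R_1} - 1$ and $\gamma_{th_2}^{HD} = 2^{2R_2} - 1$ replace the FD thresholds $2^{R_1} - 1$ and $2^{R_2} - 1$, as introduced in Corollary \ref{corollary1 no directlink for HD NOMA }. Accordingly, $\tau_1$ is replaced by $\tau_2 = \frac{\gamma_{th_2}^{HD}}{\rho(a_2 - a_1 \gamma_{th_2}^{HD})}$, still under the positivity condition $a_2 > a_1 \gamma_{th_2}^{HD}$, and $\gamma_{th_2}^{FD}$ is replaced by $\gamma_{th_2}^{HD}$ in the exponent.

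Carrying out these two substitutions in \eqref{exect OP for D2 no directlink} then yields
\[
P_{D_2,nodir}^{HD} = 1 - e^{-\frac{\tau_2}{\Omega_1} - \frac{\gamma_{th_2}^{HD}}{\rho \Omega_2}},
\]
which is exactly the claimed expression. I do not anticipate a genuine analytical obstacle, since no new integral is required; the entire content is inherited from Theorem \ref{theorem:2}. The only step demanding care is the bookkeeping of the threshold redefinition, namely justifying the factor-of-two in the exponent of the HD target SNRs from the two-slot HD protocol. Once that is settled, the $\varpi = 0$ simplification delivers the result immediately.
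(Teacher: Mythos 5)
Your proposal is correct and follows essentially the same route as the paper: the corollary is obtained by setting $\varpi = 0$ in the closed-form FD expression \eqref{exect OP for D2 no directlink} (equivalently, in the intermediate quantities $J_2$ and $J_3$, where the loop-interference integration trivializes) and replacing $\tau_1$, $\gamma_{th_2}^{FD}$ by their HD counterparts $\tau_2$, $\gamma_{th_2}^{HD}$ under the condition $a_2 > a_1\gamma_{th_2}^{HD}$, which is exactly how the paper treats it (no separate proof is given there). The only cosmetic discrepancy is the HD threshold bookkeeping: you write $\gamma_{th_2}^{HD} = 2^{2R_2}-1$, consistent with the two-slot rate $\tfrac{1}{2}\log(1+\gamma)$, whereas the paper's Corollary \ref{corollary1 no directlink for HD NOMA } displays $2^{2R_2-1}$; your version is the physically standard one and the difference does not affect the argument.
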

\subsubsection{Diversity Analysis}
To get more insights, the asymptotic diversity analysis is provided in terms of  outage probability investigated in high SNR region. The diversity order is defined as
\begin{align}\label{diversity order}
d =  - \mathop {\lim }\limits_{\rho  \to \infty } \frac{{\log \left( {P_D^\infty \left( \rho  \right)} \right)}}{{\log \rho }}.
\end{align}

\paragraph{$D_{1}$ for FD NOMA case}
Based on analytical result in \eqref{OP derived for D1 without directlink}, when $ \rho\ $$\rightarrow$ $\infty$, the asymptotic outage probability of $D_{1}$ for FD NOMA with ${e^{-x}}\approx 1-x$ is given by
\begin{align}\label{appro OP for user1 without directlink}
P_{{D_1}}^{FD,\infty }= {\rm{1}} - \frac{{{\Omega _1}}}{{{\Omega _1} + \rho {\theta _1}{\Omega _{LI}}}}.
\end{align}
Substituting \eqref{appro OP for user1 without directlink} into \eqref{diversity order}, we can obtain $d_{{D_1}}^{FD}= 0$.
\begin{remark}
The diversity order of $D_{1}$ is zero, which is the same as the conventional FD relaying.
\end{remark}
\paragraph{$D_{1}$ for HD NOMA case}
Based on analytical result in \eqref{corollary1 derived for D1 no directlink for HD NOMA }, the asymptotic outage probability of $D_{1}$ for HD NOMA is given by
\begin{align}\label{appro OP of user1 for HD without directlink}
P_{{D_1}}^{HD,\infty } = \frac{{{\theta _2}}}{{{\Omega _{\rm{1}}}}} \propto \frac{1}{\rho }.
\end{align}
Substituting \eqref{appro OP of user1 for HD without directlink} into \eqref{diversity order}, we can obtain $d_{{D_1}}^{HD} = 1$.
\paragraph{$D_{2}$ for FD NOMA case}
Based on \eqref{exect OP for D2 no directlink}, the asymptotic outage probability of $D_{2}$ for FD NOMA is given by
\begin{align}\label{appro OP for user2 without directlink}
P_{{D_2},nodir}^{FD,\infty }= 1 - \frac{{{\Omega _1}{\Omega _2}\rho  - {\Omega _1}{\gamma _{t{h_2}}^{FD}} - \tau_{1} \rho {\Omega _2}}}{{{\Omega _2}\rho \left( {{\Omega _1} + \tau_{1} \rho {\Omega _{{\rm{LI}}}}} \right)}}.
\end{align}
Substituting \eqref{appro OP for user2 without directlink} into \eqref{diversity order}, we can obtain $d_{{D_2},nodir}^{FD} = 0$.
\begin{remark}
The diversity order of $D_{2}$ is zero, which is the same as $D_{1}$ in FD NOMA.
\end{remark}
\paragraph{$D_{2}$ for HD NOMA case}
 Based on \eqref{corollary1 derived for D2 no directlink for HD NOMA }, the asymptotic outage probability of $D_{2}$ for HD NOMA is given by
\begin{align}\label{appro OP of user2 for HD without directlink }
P_{{D_2},nodir}^{HD,\infty } = \frac{{{\gamma _{t{h_2}}^{HD}}}}{{\rho {\Omega _2}}} + \frac{\tau_{2} }{{{\Omega _1}}} \propto \frac{1}{\rho }.
\end{align}

Substituting \eqref{appro OP of user2 for HD without directlink } into \eqref{diversity order}, we can obtain $d_{{D_2},nodir}^{HD} = 1$.

\begin{remark}\label{error floor existent for D1 and D2 without direct link}
As can be observed that $P_{{D_1}}^{FD,\infty }$ and $P_{{D_2},nodir}^{FD,\infty }$ are a constant independent of $\rho $, respectively. Substituting \eqref{appro OP for user1 without directlink} and \eqref{appro OP for user2 without directlink} into \eqref{diversity order}, we see that there are the error floors for outage probability of two users.
\end{remark}

\subsubsection{Throughput Analysis}\label{SecitonIII-A-4}
In this subsection, the delay-limited transmission mode \cite{Liu7445146SWIPT,Zhong6898012} is considered for FD/HD NOMA.
\paragraph{FD NOMA case}
In this mode, the BS transmits information at a constant rate $R$, which is subject to the effect of outage probability due to wireless fading channels. The system throughput of FD NOMA without direct link is given by
\begin{align}\label{Delay-limited Transmission nodirect for FD}
R_{l\_nodir}^{FD}= \left( {1 - {P_{{D_1}}^{FD}}} \right){R_{{1}}} + \left( {1 - P_{{D_2},nodir}^{FD}} \right){R_{{2}}},
\end{align}
where $P_{{D_1}}^{FD}$ and $P_{{D_2},nodir}^{FD}$ are given in \eqref{OP derived for D1 without directlink} and \eqref{exect OP for D2 no directlink}, respectively.
\paragraph{HD NOMA case}
Similar to \eqref{Delay-limited Transmission nodirect for FD}, the system throughput of HD NOMA without direct link is given by
\begin{align}\label{Delay-limited Transmission nodirect for HD}
R_{l\_nodir}^{HD} = \left( {1 - P_{{D_1}}^{HD}} \right){R_1} + \left( {1 - P_{{D_2},nodir}^{HD}} \right){R_2},
\end{align}
where ${P_{{D_1}}^{HD}}$ and $P_{{D_2},nodir}^{HD}$ are given in \eqref{corollary1 derived for D1 no directlink for HD NOMA } and \eqref{corollary1 derived for D2 no directlink for HD NOMA }, respectively.

\subsection{User Relaying with Direct Link}
In this subsection, we explore a more challenging scenario, where the direct link between the BS and $D_{2}$ is used to convey information and system reliability can be improved. However, the outage probability of $D_{1}$ will not be affected by the direct link. As such, we only show outage probability of $D_{2}$ in the following.
\subsubsection{Outage Probability of $D_{2}$}
For the second scenario, the outage events of $D_{2}$ for FD NOMA is described as below. One of the events is when $x_{2}$ can be detected at $D_{1}$, but the received SINR after MRC at $D_{2}$ in one slot is less than its target SNR. Another event is that neither $D_{1}$ nor $D_{2}$ can detect $x_{2}$. Therefore, the outage probability of $D_{2}$ is expressed as
\begin{align} \label{express OP for D2 adding RI}
P_{{D_2},dir}^{FD,RI} =& {{\rm{P}}_{\rm{r}}}\left( {\gamma _{1,{D_2}}^{RI} + \gamma _{2,{D_2}}^{RI} < \gamma _{t{h_2}}^{FD},{\gamma _{{D_2} \to {D_1}}} > \gamma _{t{h_2}}^{FD}} \right) \nonumber\\
  &+ {{\rm{P}}_{\rm{r}}}\left( {{\gamma _{{D_2} \to {D_1}}} < \gamma _{t{h_2}}^{FD},\gamma _{1,{D_2}}^{RI} < \gamma _{t{h_2}}^{FD}} \right).
\end{align}
Unfortunately, the closed-form expression of \eqref{express OP for D2 adding RI} for $D_2$ can not be derived successfully. However, it can be evaluated by using numerical simulations. To further obtain a theoretical result for $D_2$, exploiting the upper bounds of received SINRs derived in \eqref{SINR for 1D2} and \eqref{SINR for 2D2}, the outage probability of $D_{2}$ is expressed as
\begin{align} \label{express OP for D2}
P_{{D_2},dir}^{FD} = &{{\text{P}}_{\text{r}}}\left( {\gamma _{{D_2}}^{MRC} < {\gamma _{t{h_2}}^{FD}},{\gamma _{{D_2} \to {D_1}}} > {\gamma _{t{h_2}}^{FD}}} \right)\nonumber\\
&+ {{\text{P}}_{\text{r}}}\left( {{\gamma _{{D_2} \to {D_1}}} < {\gamma _{t{h_2}}^{FD}},{\gamma _{1,{D_2}}} < {\gamma _{t{h_2}}^{FD}}} \right),
\end{align}
where $\varpi  = 1$.

The following theorem provides the outage probability of $D_{2}$ for FD NOMA.
\begin{theorem} \label{theorem_outage:2}
The closed-form expression for the outage probability of $D_{2}$ with direct link is given by
\begin{align}\label{OP derived for D2}
 &P_{{D_2},dir}^{FD} = \left\{ {1 - {e^{ - \frac{{{\tau _1}}}{{{\Omega _0}}}}} - \sum\limits_{n = 0}^\infty  {\frac{{{{\left( { - 1} \right)}^n}{e^\varphi }}}{{n!{\phi _2}^{n + 1}}}} } \right.\left[ {\frac{{{{\left( { - 1} \right)}^{2n + 1}}{\phi _1}^{n + 1}}}{{(n + 1)!}}} \right. \nonumber\\
 &\left. { \times \left( {{\mathop{\rm Ei}\nolimits} \left( \psi  \right) - {\mathop{\rm Ei}\nolimits} \left( {{\phi _1}} \right)} \right)\left. { + \sum\limits_{k = 0}^n {\frac{{{{\left( {1 + {a _1}\rho {\tau _1}} \right)}^{n + 1}}{e^\psi }{\psi ^k} - {e^{{\phi _1}}}{\phi _1}^k}}{{\left( {n + 1} \right)n \cdots \left( {n + 1 - k} \right)}}} } \right]} \right\} \nonumber\\
  &\times \chi {e^{ - \frac{{{\tau _1}}}{{{\Omega _1}}}}} + \left[ {\left( {1 - \chi {e^{ - \frac{{{\tau _1}}}{{{\Omega _1}}}}}} \right)\left( {1 - {e^{ - \frac{{{\tau _1}}}{{{\Omega _0}}}}}} \right)} \right] ,
\end{align}
where $\varphi  = \frac{1}{{\rho {a_1}{\Omega _0}}} - \frac{{{\gamma _{t{h_2}}^{FD}}}}{{\rho {\Omega _2}}} - {\phi _1}$, ${\phi _1} = \frac{{ - {a_2}}}{{{a_1}\rho {\Omega _2}}}$, ${\phi _2} = {a_1}\rho {\Omega _0}$, $\psi  = \frac{{ - {a_2}}}{{{a_1}\rho {\Omega _2}\left( {1 + {a_1}\rho \tau_{1} } \right)}}$ and $\chi  = \frac{{{\Omega _1}}}{{{\Omega _1} + \tau_{1} \rho {\Omega _{{\rm{LI}}}}}}$.  $\mathrm{Ei\left(\cdot\right)}$ is the exponential integral function~\cite[Eq. (8.211.1)]{gradshteyn}.
\end{theorem}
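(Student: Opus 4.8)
The plan is to reduce \eqref{express OP for D2} to independent factors, dispatch the easy ones using the machinery already built for Theorem \ref{theorem:2}, and concentrate all effort on a single non-elementary integral. First, since all the links are mutually independent, the relay-side pair $(|h_1|^2,|h_{LI}|^2)$ governing $\gamma_{D_2 \to D_1}$ is independent of the destination-side pair $(|h_0|^2,|h_2|^2)$ governing both $\gamma_{D_2}^{MRC}$ and $\gamma_{1,D_2}$. The event $\{\gamma_{D_2 \to D_1} > \gamma_{th_2}^{FD}\}$ is exactly the complement of the event whose probability is the quantity $J_2$ computed in the proof of Theorem \ref{theorem:2} (with $\varpi=1$), so $\Pr(\gamma_{D_2 \to D_1} > \gamma_{th_2}^{FD}) = \chi e^{-\tau_1/\Omega_1}$ and its complement is $1 - \chi e^{-\tau_1/\Omega_1}$. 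Also $\{\gamma_{1,D_2} < \gamma_{th_2}^{FD}\}$ collapses to $\{|h_0|^2 < \tau_1\}$ with probability $1 - e^{-\tau_1/\Omega_0}$. Factoring the second summand of \eqref{express OP for D2} therefore reproduces the closing bracket $(1-\chi e^{-\tau_1/\Omega_1})(1-e^{-\tau_1/\Omega_0})$, and the remaining task is to evaluate $\Pr(\gamma_{D_2}^{MRC} < \gamma_{th_2}^{FD})$, which multiplies $\chi e^{-\tau_1/\Omega_1}$.

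For that CDF I would condition on $|h_0|^2 = y$ and write the second summand of $\gamma_{D_2}^{MRC}$ as $g(y) = a_2\rho y/(a_1\rho y + 1)$. The event $\{\gamma_{D_2}^{MRC} < \gamma_{th_2}^{FD}\}$ requires $g(y) < \gamma_{th_2}^{FD}$, i.e. $y < \tau_1$, and then amounts to $\{|h_2|^2 < (\gamma_{th_2}^{FD} - g(y))/\rho\}$; integrating the exponential CDF of $|h_2|^2$ against the density of $|h_0|^2$ over $(0,\tau_1)$ gives
\begin{equation}
\Pr(\gamma_{D_2}^{MRC} < \gamma_{th_2}^{FD}) = (1 - e^{-\tau_1/\Omega_0}) - \frac{e^{-\gamma_{th_2}^{FD}/(\rho\Omega_2)}}{\Omega_0}\int_0^{\tau_1} e^{-y/\Omega_0} e^{g(y)/(\rho\Omega_2)}\,dy .
\end{equation}
The substitution $u = a_1\rho y + 1$ turns $g(y)/(\rho\Omega_2)$ into $-\phi_1(1 - 1/u)$ and absorbs every constant into the prefactor $e^\varphi/\phi_2$, so the remaining integral is $\int_1^{1+a_1\rho\tau_1} e^{-u/\phi_2} e^{\phi_1/u}\,du$.

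The crux, and the step I expect to be the genuine obstacle, is this integral: it couples $e^{-u/\phi_2}$ with $e^{\phi_1/u}$ and has no elementary joint antiderivative. I would expand $e^{-u/\phi_2} = \sum_{n=0}^\infty (-1)^n u^n/(n!\phi_2^n)$ and integrate term by term, reducing everything to the family $\int_1^{1+a_1\rho\tau_1} u^n e^{\phi_1/u}\,du$. Substituting $t = 1/u$ maps this to $\int_{1/(1+a_1\rho\tau_1)}^{1} t^{-(n+2)} e^{\phi_1 t}\,dt$, which I would resolve by the reduction $\int t^{-m} e^{\phi_1 t}\,dt = -e^{\phi_1 t}/((m-1)t^{m-1}) + (\phi_1/(m-1))\int t^{-(m-1)} e^{\phi_1 t}\,dt$, iterated down to the base case $\int t^{-1} e^{\phi_1 t}\,dt = \mathrm{Ei}(\phi_1 t)$ \cite[Eq. (8.211.1)]{gradshteyn}. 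Evaluating at $t = 1$ and $t = 1/(1+a_1\rho\tau_1)$ yields the exponential-integral pair with arguments $\phi_1$ and $\psi = \phi_1/(1+a_1\rho\tau_1)$, weighted by $\phi_1^{n+1}/(n+1)!$, together with the finite boundary sum $\sum_{k=0}^n [(1+a_1\rho\tau_1)^{n+1}e^\psi\psi^k - e^{\phi_1}\phi_1^k]/[(n+1)n\cdots(n+1-k)]$, exactly the inner bracket of \eqref{OP derived for D2}.

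Finally I would reassemble: multiplying the resulting $\Pr(\gamma_{D_2}^{MRC} < \gamma_{th_2}^{FD})$ by $\chi e^{-\tau_1/\Omega_1}$ and adding the already-factored second term gives \eqref{OP derived for D2}. Two points deserve a line of justification, namely the interchange of the Maclaurin series with the integral (legitimate since the integrand is continuous and the series converges uniformly on the finite interval) and the standing assumption $a_2 > a_1\gamma_{th_2}^{FD}$, which keeps $\tau_1 > 0$ so that the integration limits and the reduction to $\{|h_0|^2 < \tau_1\}$ are well posed.
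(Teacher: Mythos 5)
Your proposal is correct and follows essentially the same route as the paper's Appendix~A: the same independence-based factorization into $\Pr(\gamma_{D_2}^{MRC}<\gamma_{th_2}^{FD})\cdot\Pr(\gamma_{D_2\to D_1}>\gamma_{th_2}^{FD})$ plus the product of the two failure probabilities, the same conditioning on $|h_0|^2$ with the substitution $u=a_1\rho y+1$, and the same Maclaurin expansion of $e^{-u/\phi_2}$ followed by the $t=1/u$ change of variables. The only cosmetic difference is that you derive the family $\int t^{-(n+2)}e^{\phi_1 t}\,dt$ by an iterated reduction formula, whereas the paper cites the equivalent closed form \cite[Eq.\ (3.351.4)]{gradshteyn}; your added remarks on the series--integral interchange and on $a_2>a_1\gamma_{th_2}^{FD}$ are sound but not load-bearing.
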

\begin{proof}
See Appendix~A.
\end{proof}

\subsubsection{Diversity Analysis}
In this subsection, the diversity order of $D_{2}$ with direct link for FD NOMA is analyzed in the following.
\paragraph{$D_{2}$ for FD NOMA case}
For $D_{2}$ with direct link, it is challenging to obtain diversity order from \eqref{OP derived for D2}. We can use Gaussian-Chebyshev quadrature to find an approximation from \eqref{express OP for D2} and the approximated expression of outage probability for $D_{2}$ at high SNR is given by
\begin{align}\label{appro OP for user2}
 &P_{{D_2},dir}^{FD,appro} = \left[ {\frac{{{\tau _1}}}{{{\Omega _0}}} - \left( {1 - \frac{{{\Omega _2}{\tau _1} + 2{\Omega _0}{\tau _1}\left( {{a_2} - {a_1}\gamma _{t{h_2}}^{FD}} \right)}}{{2{\Omega _0}{\Omega _2}}}} \right)} \right. \nonumber\\
  &\times \frac{{{\tau _1}\pi }}{{2N{\Omega _0}}}\sum\limits_{n = 1}^N {\left( {1 + \frac{{\left( {{s_n} + 1} \right){\tau _1}{a _2}}}{{{\Omega _2}\left( {\left( {{s_n} + 1} \right){\tau _1}{a _1}\rho  + 2} \right)}} - \frac{{{s_n}{\tau _1}}}{{2{\Omega _0}}}} \right)} \nonumber \\
  &\times \left. {\sqrt {1 - s_n^2} } \right]\frac{{{\Omega _1}}}{{\left( {{\Omega _1} + {\tau _1}\rho {\Omega _{{\rm{LI}}}}} \right)}}{\rm{ + }}\left( {1 - \frac{{{\Omega _1}}}{{\left( {{\Omega _1} + {\tau _1}\rho {\Omega _{{\rm{LI}}}}} \right)}}} \right)\frac{{{\tau _1}}}{{{\Omega _0}}} ,
\end{align}
where $N$ is a parameter to ensure a complexity-accuracy tradeoff, ${s_n} = \cos \left( {\frac{{2n - 1}}{{2N}}\pi } \right)$.
Substituting \eqref{appro OP for user2} into \eqref{diversity order}, we can obtain $d_{{D_2},dir}^{FD}  = 1$.
\begin{remark}
From above explanation, the observation is that the direct link $(BS \rightarrow D_{2})$ to convey information is an effective way to overcome the problem of zero diversity order for $D_{2}$.
\end{remark}
\paragraph{$D_{2}$ for HD NOMA case}
The outage performance of $D_{2}$ for HD NOMA has been investigated in \cite{Ding2014Cooperative} and we can obtain $d_{{D_2},dir}^{HD} = 2$.

\subsubsection{Throughput Analysis}\label{SecitonIII-B-3}
Based on the derived results of outage probability above, we obtain the throughput expressions for FD/HD NOMA in delay-limited transmission mode as below.
\paragraph{FD NOMA case}
As suggested in Section \ref{SecitonIII-A-4}, the system throughput of FD NOMA with direct link is given by
\begin{align}\label{Delay-limited Transmission direct for FD}
R_{l\_dir}^{FD}= \left( {1 - {P_{{D_1}}^{FD}}} \right){R_{{1}}} + \left( {1 - P_{{D_2},dir}^{FD}} \right){R_{{2}}},
\end{align}
where ${P_{{D_1}}^{FD}}$ and $P_{{D_2},dir}^{FD}$ can be obtained from \eqref{OP derived for D1 without directlink} and \eqref{OP derived for D2}, respectively.
\paragraph{HD NOMA case}
Similar to \eqref{Delay-limited Transmission direct for FD}, the system throughput of HD NOMA with direct link is given by
\begin{align}\label{Delay-limited Transmission direct for HD}
R_{l\_dir}^{HD} = \left( {1 - P_{{D_1}}^{HD}} \right){R_1} + \left( {1 - P_{{D_2},dir}^{HD}} \right){R_2},
\end{align}
where ${P_{{D_1}}^{HD}}$ and $P_{{D_2},dir}^{HD}$ can be obtained from \eqref{corollary1 derived for D1 no directlink for HD NOMA } and \cite[Eq. (11)]{Ding2014Cooperative}.

\section{Ergodic rate}\label{ergodic rate}
When user's rates are determined by their channel conditions, the ergodic sum rate is an important metric for performance evaluation. Hence the performance of FD/HD user relaying are characterized in terms of ergodic sum rates in the following.
\subsection{User Relaying without Direct Link}
\subsubsection{Ergodic Rate of $D_{1}$}
On the condition that $D_{1}$ can detect $x_{2}$, the achievable rate of $D_{1}$ can be written as ${R_{{D_1}}} = \log \left( {1 + \gamma_{D_{1}} } \right)$. The ergodic rate of $D_{1}$ for FD NOMA can be obtained in the following theorem.
\begin{theorem}\label{theorem_ergodic_rate_D1:4}
The closed-form expression of ergodic rate for $D_{1}$ without direct link for FD NOMA  is given by
\begin{align}\label{ergodic rate of D1 without directlink for FD}
 R_{{D_1}}^{FD} =& \frac{{{a_1}{\Omega _1}}}{{\ln 2\left( {{\Omega _{LI}} - {a_1}{\Omega _1}} \right)}}\left[ {{e^{\frac{1}{{{a_1}\rho {\Omega _1}}}}}{{\mathop{\rm E}\nolimits} _{\mathop{\rm i}\nolimits} }\left( {\frac{{ - 1}}{{{a_1}\rho {\Omega _1}}}} \right)} \right. \nonumber \\
 & \left. { - {e^{\frac{1}{{\rho {\Omega _{LI}}}}}}{{\mathop{\rm E}\nolimits} _{\mathop{\rm i}\nolimits} }\left( {\frac{{ - 1}}{{\rho {\Omega _{LI}}}}} \right)} \right] .
\end{align}
\end{theorem}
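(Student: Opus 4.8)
The plan is to evaluate the expectation $R_{D_1}^{FD} = \mathbb{E}\{\log_2(1+\gamma_{D_1})\}$ directly, where $\gamma_{D_1}$ is the post-SIC SINR in \eqref{SINR for D2 at D1} specialised to FD mode $\varpi = 1$, i.e. $\gamma_{D_1} = a_1\rho|h_1|^2/(\rho|h_{LI}|^2 + 1)$. Rather than attacking the raw double integral over $|h_1|^2$ and $|h_{LI}|^2$, I would exploit the complementary-CDF representation of the logarithmic expectation, $\mathbb{E}\{\log_2(1+Z)\} = \frac{1}{\ln 2}\int_0^\infty \frac{\Pr(Z > z)}{1+z}\,dz$, which reduces the problem to computing a single tail probability followed by one integral against $1/(1+z)$.

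First I would compute the tail probability $\Pr(\gamma_{D_1} > z) = \Pr\big(|h_1|^2 > z(\rho|h_{LI}|^2 + 1)/(a_1\rho)\big)$ by conditioning on $|h_{LI}|^2$, using the exponential complementary CDF of $|h_1|^2$ with parameter $\Omega_1$, and then integrating out $|h_{LI}|^2$ against its exponential density with parameter $\Omega_{LI}$. The inner average is a plain Laplace-type integral in $|h_{LI}|^2$, and it yields $\Pr(\gamma_{D_1} > z) = \frac{a_1\Omega_1}{\Omega_{LI}z + a_1\Omega_1}\, e^{-z/(a_1\rho\Omega_1)}$. Substituting this into the complementary-CDF formula leaves the single integral $R_{D_1}^{FD} = \frac{a_1\Omega_1}{\ln 2}\int_0^\infty \frac{e^{-z/(a_1\rho\Omega_1)}}{(1+z)(\Omega_{LI}z + a_1\Omega_1)}\,dz$.

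Next I would decompose the rational factor by partial fractions into a $1/(1+z)$ term and a $1/(\Omega_{LI}z + a_1\Omega_1)$ term, whose coefficients supply the common denominator $\Omega_{LI} - a_1\Omega_1$ appearing in the claimed prefactor. Each resulting integral has the standard form $\int_0^\infty \frac{e^{-\mu x}}{x+\beta}\,dx = -e^{\mu\beta}\mathrm{Ei}(-\mu\beta)$ (a standard exponential-integral identity, cf. \cite{gradshteyn}): the first term with $\beta = 1$ produces $e^{1/(a_1\rho\Omega_1)}\mathrm{Ei}(-1/(a_1\rho\Omega_1))$, while the second, after factoring $\Omega_{LI}$ out of its linear denominator so that $\beta = a_1\Omega_1/\Omega_{LI}$, produces $e^{1/(\rho\Omega_{LI})}\mathrm{Ei}(-1/(\rho\Omega_{LI}))$. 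Collecting the constants then recovers \eqref{ergodic rate of D1 without directlink for FD}.

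The calculation is essentially routine once the complementary-CDF representation is adopted; the genuine idea is that choice of representation, which circumvents the need to find the density of the ratio $\gamma_{D_1}$ explicitly. The main place to be careful is the sign and constant bookkeeping across the partial-fraction step and the $\mathrm{Ei}(\cdot)$ identity, together with the implicit assumption $\Omega_{LI} \neq a_1\Omega_1$ required for the split; the coincident case $\Omega_{LI} = a_1\Omega_1$ is a removable singularity recovered as a limit, and one should also confirm that the arguments of $\mathrm{Ei}(\cdot)$ stay negative so the integral identity applies on the correct branch.
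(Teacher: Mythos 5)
Your proposal is correct and follows essentially the same route as the paper's own proof in Appendix~B: the complementary-CDF representation of the ergodic rate, the tail probability $\Pr(\gamma_{D_1}>z)=\frac{a_1\Omega_1}{\Omega_{LI}z+a_1\Omega_1}e^{-z/(a_1\rho\Omega_1)}$ obtained by conditioning on the loop-interference channel, a partial-fraction split, and the identity $\int_0^\infty \frac{e^{-\mu x}}{x+\beta}\,dx=-e^{\mu\beta}\mathrm{Ei}(-\mu\beta)$ from \cite[Eq.~(3.352.4)]{gradshteyn}. Your added care about the removable singularity at $\Omega_{LI}=a_1\Omega_1$ and the sign of the $\mathrm{Ei}(\cdot)$ arguments is a small refinement the paper omits, but it does not change the argument.
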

\begin{proof} See Appendix B.
\end{proof}

As such, we can derive the ergodic rate of $D_{1}$ for HD NOMA in the following corollary.
\begin{corollary}\label{ergodic rate of D1 without directlink for HD}
The ergodic rate of $D_{1}$ for HD NOMA is given by
\begin{align}\label{ergodic rate for D1 with HD}
R_{{D_1}}^{HD} = \frac{{ - {e^{\frac{1}{{{a _1}\rho {\Omega _1}}}}}}}{{2\ln 2}}{\mathop{\rm Ei}\nolimits} \left( {\frac{{ - 1}}{{{a _1}\rho {\Omega _1}}}} \right).
\end{align}
\end{corollary}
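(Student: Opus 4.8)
The plan is to specialize the ergodic-rate computation of Theorem \ref{theorem_ergodic_rate_D1:4} to the half-duplex setting, where the switching factor is $\varpi = 0$. First I would note that setting $\varpi = 0$ in \eqref{SINR for D2 at D1} annihilates the loop-interference term, so the post-SIC SINR at $D_1$ collapses to the interference-free quantity $\gamma_{D_1} = a_1 \rho |h_1|^2$, now a function of the single random variable $|h_1|^2$ rather than of the pair $(|h_1|^2,|h_{LI}|^2)$ that governs the full-duplex case.

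Next I would write the HD ergodic rate as $R_{D_1}^{HD} = \frac{1}{2}\,\mathbb{E}\{\log_2(1 + a_1 \rho |h_1|^2)\}$, where the pre-log factor $\frac{1}{2}$ reflects that the half-duplex protocol consumes two orthogonal time slots per delivered symbol (the same convention that produces the factor $2$ in the exponents of $\gamma_{t h_1}^{HD}$ and $\gamma_{t h_2}^{HD}$ in Corollary \ref{corollary1 no directlink for HD NOMA }). Converting $\log_2$ to the natural logarithm is what yields the $2\ln 2$ in the denominator of \eqref{ergodic rate for D1 with HD}. Using that $|h_1|^2$ is exponentially distributed with mean $\Omega_1$, I would then express the expectation as $\frac{1}{2\ln 2}\int_0^\infty \ln(1 + a_1 \rho x)\,\frac{1}{\Omega_1}e^{-x/\Omega_1}\,dx$.

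The computational core is evaluating this single integral. I would invoke the standard identity $\int_0^\infty \ln(1 + cx)\,\frac{1}{\Omega}e^{-x/\Omega}\,dx = -e^{1/(c\Omega)}\,\mathrm{Ei}\!\left(-1/(c\Omega)\right)$, which follows from the same integration-by-parts argument used in Appendix B but with the $h_{LI}$ averaging suppressed. Taking $c = a_1 \rho$ and $\Omega = \Omega_1$ and multiplying by the $\frac{1}{2\ln 2}$ prefactor reproduces \eqref{ergodic rate for D1 with HD} directly.

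I expect the only genuine subtlety to lie in the pre-log factor rather than in the integral. One cannot obtain the HD result merely by letting $\Omega_{LI}\to 0$ in the full-duplex expression \eqref{ergodic rate of D1 without directlink for FD}: that limit does correctly kill the second exponential-integral term (since $-e^{x}\,\mathrm{Ei}(-x)\to 0$ as $x\to\infty$) and collapses the prefactor $\frac{a_1\Omega_1}{\ln 2(\Omega_{LI}-a_1\Omega_1)}$ to $\frac{-1}{\ln 2}$, but it leaves a value that is \emph{twice} the claimed one, because the full-duplex rate carries no $\frac{1}{2}$ factor. Supplying that missing factor of two from the two-slot structure of the HD protocol is the one place where care is needed, and it is worth flagging explicitly so the reader does not expect a naive $\Omega_{LI}\to 0$ substitution to suffice.
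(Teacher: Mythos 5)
Your proposal is correct and follows essentially the route the paper intends: setting $\varpi=0$ collapses the SINR to $a_1\rho|h_1|^2$, the half-duplex pre-log factor $\tfrac{1}{2}$ supplies the $2\ln 2$, and the expectation over the exponential channel gain reduces to the same exponential-integral identity (Gradshteyn--Ryzhik 3.352.4) used in Appendix~B. Your flagged subtlety is also accurate: the naive $\Omega_{LI}\to 0$ limit of \eqref{ergodic rate of D1 without directlink for FD} indeed yields $\frac{-1}{\ln 2}e^{\frac{1}{a_1\rho\Omega_1}}\mathrm{Ei}\left(\frac{-1}{a_1\rho\Omega_1}\right)$, which is twice \eqref{ergodic rate for D1 with HD}, precisely because the FD rate carries no half-duplex pre-log factor.
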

\subsubsection{Ergodic Rate of $D_{2}$}
Since $x_{2}$ should be detected at $D_{2}$ as well as at $D_{1}$ for SIC, the achievable rate of $D_{2}$ without direct link for FD NOMA is written as $R_{{D_2}} = \log \left( {1 + \min \left( {{\gamma _{{D_2} \to {D_1}}}, {\gamma _{2,{D_2}}}} \right)} \right)$. The corresponding ergodic rate is given by
\begin{align}\label{ergodic rate of D2 without directlink for FD}
R_{{D_2},nodir}^{FD} = \frac{1}{{\ln 2}}\int_0^\infty  {\frac{{1 - {F_{{X_1}}}\left( {{x_1}} \right)}}{{1 + {x_1}}}} d{x_1},
\end{align}
where ${X_1} = \min \left( {\frac{{{{\left| {{h_1}} \right|}^2}{a _2}\rho }}{{{{\left| {{h_1}} \right|}^2}{a _1}\rho  + \varpi {{\left| {{h_{LI}}} \right|}^2}\rho  + 1}},{{\left| {{h_2}} \right|}^2}\rho } \right)$ with $\varpi  = 1$.
Obviously, it is difficult to obtain the CDF of $X_{1}$. However, in order to derive an accurate closed-form expression for the ergodic rate applicable to high SNR region, the following theorem provides the high SNR approximation.
\begin{theorem}\label{theorem_ergodic_rate_D2:4}
The asymptotic expression for ergodic rate of $D_{2}$ without direct link for FD NOMA in the high SNR region is given by
\begin{align}\label{appro ergodic rate of D2 without directlink for FD}
R_{{D_2},nodir}^{FD,\infty } &= \frac{1}{{\ln 2}}\left\{ {{e^{\frac{1}{{\rho {\Omega _2}}}}}\left[ {{\mathop{\rm Ei}\nolimits} \left( {\frac{{ - 1}}{{\rho {a _1}{\Omega _2}}}} \right) - {\mathop{\rm Ei}\nolimits} \left( {\frac{{ - 1}}{{\rho {\Omega _2}}}} \right)} \right]} \right. \nonumber \\
  &\times \left( {\frac{{{\Omega _1}}}{{{a _2}{\Omega _1} - \xi }}} \right) - \frac{{{e^{\frac{{{a _2}{\Omega _1}}}{{\rho {\Omega _2}\xi }}}}}}{\xi }\left[ {{\mathop{\rm Ei}\nolimits} \left( { - \frac{{{a _2}\xi  + {a _1}{a _2}{\Omega _1}}}{{\rho {a _1}\xi {\Omega _2}}}} \right)} \right. \nonumber \\
 &\left. { - \left. {{\mathop{\rm Ei}\nolimits} \left( { - \frac{{{a _2}{\Omega _1}}}{{\rho {\Omega _2}\xi }}} \right)} \right]\left( {\frac{{{a _1}{a _2}\Omega _1^2 + {a _2}{\Omega _1}\xi }}{{{a _2}{\Omega _1} - \xi }}} \right)} \right\} ,
\end{align}
where $\xi  = \left( {{\Omega _{LI}} - {a _1}{\Omega _1}} \right)$.
\end{theorem}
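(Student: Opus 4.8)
The plan is to evaluate the single integral in \eqref{ergodic rate of D2 without directlink for FD} by first reducing the complementary CDF $1 - F_{X_1}(x)$ to a tractable high-SNR form and then integrating it against $\frac{1}{1+x}$. Since $X_1 = \min\left(\gamma_{{D_2}\to{D_1}}, \gamma_{2,D_2}\right)$ and the gains $|h_1|^2$, $|h_{LI}|^2$, $|h_2|^2$ are mutually independent, I would factor $1 - F_{X_1}(x) = \Pr\!\left(\gamma_{{D_2}\to{D_1}} > x\right)\Pr\!\left(\gamma_{2,D_2} > x\right)$, which splits the problem into two tail probabilities that can be handled separately.

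For the $D_2$ branch, $\gamma_{2,D_2} = |h_2|^2\rho$ gives the exact tail $\Pr\!\left(\gamma_{2,D_2} > x\right) = e^{-x/(\rho\Omega_2)}$ from the exponential law of $|h_2|^2$. For the relaying branch, the event $\gamma_{{D_2}\to{D_1}} > x$ is nonempty only for $x < a_2/a_1$ and is equivalent to $|h_1|^2 > \frac{x(|h_{LI}|^2\rho + 1)}{(a_2 - a_1 x)\rho}$; conditioning on $|h_{LI}|^2$ and integrating the two exponential densities yields a closed form of the type $\frac{e^{-c}}{1 + c\rho\Omega_{LI}}$ with $c = \frac{x}{(a_2-a_1 x)\rho\Omega_1}$. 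The key observation is that as $\rho\to\infty$ one has $e^{-c}\to 1$ while $c\rho\Omega_{LI}\to\frac{x\Omega_{LI}}{(a_2-a_1x)\Omega_1}$ remains finite, so this tail converges to $\frac{(a_2-a_1 x)\Omega_1}{a_2\Omega_1 + \xi x}$ with $\xi = \Omega_{LI} - a_1\Omega_1$. Retaining the exact $D_2$ exponential (so that the asymptotic rate still captures the finite-SNR behaviour rather than collapsing to a constant) produces
\[
R_{{D_2},nodir}^{FD,\infty} = \frac{1}{\ln 2}\int_0^{a_2/a_1}\frac{(a_2 - a_1 x)\Omega_1\, e^{-x/(\rho\Omega_2)}}{(1+x)(a_2\Omega_1 + \xi x)}\,dx.
\]

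To carry out this integral I would decompose the rational prefactor as $\frac{P}{1+x} + \frac{Q}{a_2\Omega_1 + \xi x}$. Evaluating the residue at $x=-1$ and using $a_1 + a_2 = 1$ gives $P = \frac{\Omega_1}{a_2\Omega_1 - \xi}$, while the residue at $x = -a_2\Omega_1/\xi$ gives $Q$ proportional to $\frac{a_1 a_2\Omega_1^2 + a_2\Omega_1\xi}{a_2\Omega_1 - \xi}$. Each resulting integral reduces to $\int \frac{e^{-\mu u}}{u}\,du = \mathrm{Ei}(-\mu u)$ after the affine substitutions $u = 1+x$ (with upper limit $1 + a_2/a_1 = 1/a_1$) and $u = a_2\Omega_1 + \xi x$, respectively. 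The boundary evaluations then produce exactly the arguments $-\frac{1}{\rho a_1\Omega_2}$, $-\frac{1}{\rho\Omega_2}$, $-\frac{a_2\xi + a_1 a_2\Omega_1}{\rho a_1\xi\Omega_2}$ and $-\frac{a_2\Omega_1}{\rho\Omega_2\xi}$ appearing in \eqref{appro ergodic rate of D2 without directlink for FD}, together with the prefactors $e^{1/(\rho\Omega_2)}$ and $e^{a_2\Omega_1/(\rho\Omega_2\xi)}$; assembling the two contributions yields the stated expression.

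The step I expect to be the main obstacle is justifying and executing the high-SNR reduction of the $\gamma_{{D_2}\to{D_1}}$ tail cleanly: one must recognise that $e^{-c}\to 1$ while $c\rho\Omega_{LI}$ does \emph{not} vanish, so the loop-interference power survives in the limit as the $\xi$-dependent rate ceiling, and one must track the integration domain $x < a_2/a_1$ throughout so that the upper endpoint collapses to $1/a_1$. Once the exponential-integral primitive is invoked, the remaining partial-fraction integration is routine.
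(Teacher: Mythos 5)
Your proposal is correct and follows essentially the same route as the paper's Appendix C: both keep the exact exponential tail $e^{-x/(\rho\Omega_2)}$ of $\gamma_{2,D_2}$, reduce the relaying tail at high SNR to $\frac{(a_2-a_1x)\Omega_1}{a_2\Omega_1+\xi x}$ on $[0, a_2/a_1)$ (the paper drops the $+1$ in the SINR before computing the CDF, you take the exact tail $\frac{e^{-c}}{1+c\rho\Omega_{LI}}$ and then let $\rho\to\infty$, which is the same approximation), and both finish by partial fractions and the exponential-integral primitive. Your single decomposition $\frac{P}{1+x}+\frac{Q}{a_2\Omega_1+\xi x}$ versus the paper's split into $J_2-J_3$ is only a cosmetic algebraic difference.
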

\begin{proof} See Appendix C.
\end{proof}

For $\varpi  = 0$, the ergodic rate of $D_{2}$ without direct link for HD NOMA is given by
 \begin{align}\label{ergodic rate of D2 without directlink for HD}
R_{{D_2},nodir}^{HD} = \frac{1}{{2\ln 2}}\int_0^{\frac{{{a _2}}}{{{a _1}}}} {\frac{{{e^{ - \frac{y}{{\rho \left( {{a _2} - y{a _1}} \right){\Omega _1}}} - \frac{y}{{\rho {\Omega _2}}}}}}}{{1 + y}}} dy.
 \end{align}
 As can be seen from the above expression, \eqref{ergodic rate of D2 without directlink for HD} does not have a closed-form solution. Corollary 4 gives the high SNR approximation.

\begin{corollary}\label{high_SNR_sum_ergodic_rate_nodirectlink:4}
The asymptotic expression for ergodic rate of $D_{2}$ without direct link for HD NOMA in the high SNR region is given by
\begin{align}\label{appro ergodic rate of D2 without directlink for HD}
R_{{D_2},nodir}^{HD,\infty }=\frac{{{e^{\frac{1}{{\rho {\Omega _2}}}}}}}{{2\ln 2}}\left[ {{\mathop{\rm Ei}\nolimits} \left( {\frac{{ - 1}}{{\rho {a _1}{\Omega _2}}}} \right) - {\mathop{\rm Ei}\nolimits} \left( {\frac{{ - 1}}{{\rho {\Omega _2}}}} \right)} \right].
\end{align}
\begin{proof} See Appendix D.
\end{proof}
\end{corollary}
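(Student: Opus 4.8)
The goal is to establish the high-SNR asymptotic form \eqref{appro ergodic rate of D2 without directlink for HD} for the half-duplex ergodic rate of $D_2$ starting from the integral representation \eqref{ergodic rate of D2 without directlink for HD}. The plan is to approximate the exact integral in the high-SNR regime, where $\rho \to \infty$, so that the troublesome coupling term in the exponent becomes negligible and the remaining integral can be matched to a standard exponential-integral form.

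First I would examine the exponent in \eqref{ergodic rate of D2 without directlink for HD}, namely $-\frac{y}{\rho(a_2 - y a_1)\Omega_1} - \frac{y}{\rho\Omega_2}$. The first term carries the relaying-link contribution and the second the $D_1\to D_2$ link. As $\rho \to \infty$, the first term is $O(1/\rho)$ and, crucially, its $y$-dependence through the factor $(a_2 - y a_1)^{-1}$ becomes a higher-order effect relative to the dominant $1/\rho$ scaling. The natural approximation is therefore to drop the first exponential term entirely (it tends to $1$), reducing the exponent to just $-\frac{y}{\rho\Omega_2}$; this is the high-SNR simplification I would justify via $e^{-x}\approx 1$ or by an $O(1/\rho)$ argument on the first summand. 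The integral then collapses to
\begin{align*}
R_{{D_2},nodir}^{HD,\infty} \approx \frac{1}{2\ln 2}\int_0^{a_2/a_1}\frac{e^{-y/(\rho\Omega_2)}}{1+y}\,dy.
\end{align*}

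Next I would evaluate this surviving integral in closed form. The substitution $u = 1+y$ converts it to $\int_1^{1+a_2/a_1} e^{-(u-1)/(\rho\Omega_2)}/u\,du$, and after factoring out $e^{1/(\rho\Omega_2)}$ the integrand is of the form $e^{-cu}/u$, whose antiderivative is an exponential integral. Evaluating at the two limits $u=1+a_2/a_1 = 1/a_1$ (using $a_1+a_2=1$) and $u=1$, and matching the arguments $-1/(\rho a_1 \Omega_2)$ and $-1/(\rho\Omega_2)$ to the $\mathrm{Ei}$ form via \cite[Eq. (8.211.1)]{gradshteyn}, directly yields the stated expression \eqref{appro ergodic rate of D2 without directlink for HD}. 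The upper limit simplifying cleanly to $1/a_1$ is the step that makes the two $\mathrm{Ei}$ arguments line up with the claimed result.

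The main obstacle I anticipate is rigorously justifying the discarding of the relaying-link exponential term rather than treating it as a perturbation that might contribute at the same $1/\rho$ order. One must verify that, although both exponent terms scale as $1/\rho$, the contribution of the first term to the integral is genuinely subdominant—intuitively because on the compact interval $[0, a_2/a_1]$ the factor $(a_2 - y a_1)^{-1}$ stays bounded and the whole term remains a uniformly small $O(1/\rho)$ correction that vanishes in the leading asymptotic. I would make this precise by bounding the difference between the exact integrand and the approximated one uniformly in $y$ on the finite domain and showing the resulting error is $o(1)$ relative to the retained $\mathrm{Ei}$ terms as $\rho\to\infty$. The remaining algebra—the substitution and $\mathrm{Ei}$ matching—is routine once this simplification is secured.
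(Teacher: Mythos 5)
Your proposal follows essentially the same route as the paper's Appendix D. The paper replaces $\gamma_{D_2\to D_1}=\frac{|h_1|^2a_2\rho}{|h_1|^2a_1\rho+1}$ (with $\varpi=0$) by its high-SNR limit $\frac{a_2}{a_1}$ inside the min, computes the CDF of $Y=\min\left(\frac{a_2}{a_1},|h_2|^2\rho\right)$, and integrates; that is mathematically identical to what you do, namely dropping the factor $e^{-\frac{y}{\rho(a_2-ya_1)\Omega_1}}$ from the integrand of \eqref{ergodic rate of D2 without directlink for HD} and evaluating $\frac{1}{2\ln 2}\int_0^{a_2/a_1}\frac{e^{-y/(\rho\Omega_2)}}{1+y}\,dy$ via $u=1+y$ (upper limit $1/a_1$) and the exponential integral. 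The reduced integral and its Ei evaluation are correct and yield \eqref{appro ergodic rate of D2 without directlink for HD}.

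However, one piece of your justification is wrong as stated: $(a_2-ya_1)^{-1}$ is \emph{not} bounded on $[0,a_2/a_1]$; it diverges at the upper endpoint, where $a_2-ya_1\to 0$. Consequently the discarded exponential is not uniformly close to $1$: for any finite $\rho$, near $y=\frac{a_2}{a_1}$ it is close to $0$, so the pointwise difference between the exact and approximated integrands remains of order $\frac{1}{1+a_2/a_1}$ there, and the uniform-in-$y$ bound you propose for the rigorization does not exist. The dropped term is negligible only in the integrated sense: for $y\le \frac{a_2}{a_1}-\delta$ one has $a_2-ya_1\ge a_1\delta$, so the exponent is $O\!\left(\frac{1}{\rho\delta}\right)$ and that portion of the error is uniformly small, while the endpoint region contributes at most $\int_{a_2/a_1-\delta}^{a_2/a_1}\frac{dy}{1+y}\le\delta$; choosing, e.g., $\delta=\rho^{-1/2}$ (or simply invoking dominated convergence, since the integrands are dominated by $\frac{1}{1+y}$ on a finite interval and converge pointwise for $y<\frac{a_2}{a_1}$) gives the required $o(1)$ error against the $O(1)$ retained term. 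With your uniform-bound argument replaced by this splitting (or dominated-convergence) argument, the proof is complete and matches the paper's result.
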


\subsubsection{Slope Analysis}
In this subsection, the high SNR slope is evaluated, which is the key parameter determining  ergodic rate in high SNR region. The high SNR slope is defined as
\begin{align}\label{high SNR slop}
S = \mathop {\lim }\limits_{\rho  \to \infty } \frac{{R_{D}^\infty \left( \rho  \right)}}{{\log \left( \rho  \right)}}.
\end{align}

\paragraph{$D_{1}$ for FD NOMA case}
Based on \eqref{ergodic rate of D1 without directlink for FD}, when $ \rho\ $$\rightarrow$ $\infty$, by using ${\mathop{\rm Ei}\nolimits} \left( { - x} \right) \approx \ln \left( x \right) + C$~\cite[Eq. (8.212.1)]{gradshteyn} and ${e^{-x}}\approx 1- x$, where $C$ is the Euler constant, the asymptotic ergodic rate of $D_{1}$ for FD NOMA is given by
\begin{align}\label{asymp rate of user1 for FD without directlink}
 R_{{D_1}}^{FD,\infty } =& \frac{{{a_1}{\Omega _1}}}{{\ln 2\left( {{\Omega _{LI}} - {a_1}{\Omega _1}} \right)}}\left[ {\left( {1 + \frac{1}{{{a_1}\rho {\Omega _1}}}} \right)} \right.\left( {\ln \left( {\frac{1}{{{a_1}\rho {\Omega _1}}}} \right)} \right. \nonumber\\
 &\left. { + C} \right)\left. { - \left( {1 + \frac{1}{{\rho {\Omega _{LI}}}}} \right)\left( {\ln \left( {\frac{1}{{\rho {\Omega _{LI}}}}} \right) + C} \right)} \right].
\end{align}
Substituting \eqref{asymp rate of user1 for FD without directlink} into \eqref{high SNR slop}, we can obtain $S_{{D_1}}^{FD}  = 0$.
\paragraph{$D_{1}$ for HD NOMA case}
Based on \eqref{ergodic rate for D1 with HD}, the asymptotic ergodic rate of $D_{1}$ for HD NOMA in the high SNR region is given by
\begin{align}\label{asymp rate of user1 for HD without directlink}
R_{{D_1}}^{HD,\infty } = \frac{{ - 1}}{{2\ln 2}}\left( {1 + \frac{1}{{{a_1}\rho {\Omega _1}}}} \right)\left[ {\ln \left( {\frac{1}{{{a_1}\rho {\Omega _1}}}} \right) + C} \right].
\end{align}
Substituting \eqref{asymp rate of user1 for HD without directlink} into \eqref{high SNR slop}, we can obtain $S_{{D_1}}^{HD} = \frac{1}{2}$.
\paragraph{$D_{2}$ for FD NOMA case}\label{User2 for FD NOMA case slop without directlink}
Based on above analysis, substituting \eqref{appro ergodic rate of D2 without directlink for FD} into \eqref{high SNR slop}, we can obtain $S_{{D_2},nodir}^{FD}  =  0$.
\paragraph{$D_{2}$ for HD NOMA case}
Such as \eqref{User2 for FD NOMA case slop without directlink}, substituting \eqref{appro ergodic rate of D2 without directlink for HD} into \eqref{high SNR slop}, we can obtain $S_{{D_2},nodir}^{HD} =0 $.
\begin{remark}\label{ceiling for D2 without direct link}
Based on above analysis, the ergodic rate of $D_{2}$ converges to a throughput ceiling in the high SNR region for FD/HD NOMA without direct link.
\end{remark}
Combing \eqref{appro ergodic rate of D2 without directlink for FD} and \eqref{asymp rate of user1 for FD without directlink}, the asymptotic expression for ergodic sum rate of FD NOMA without direct link is expressed as
\begin{align}\label{ergodic sum rate without directlink for FD NOMA}
 &R_{sum,nodir}^{FD,\infty } = \frac{{{a_1}{\Omega _1}}}{{\ln 2\left( {{\Omega _{LI}} - {a_1}{\Omega _1}} \right)}}\left[ {\left( {1 + \frac{1}{{{a_1}\rho {\Omega _1}}}} \right)} \right. \nonumber \\
 & \times \left( {\ln \left( {\frac{1}{{{a_1}\rho {\Omega _1}}}} \right) + C} \right)\left. { - \left( {1 + \frac{1}{{\rho {\Omega _{LI}}}}} \right)\left( {\ln \left( {\frac{1}{{\rho {\Omega _{LI}}}}} \right) + C} \right)} \right] \nonumber\\
  &+ \frac{1}{{\ln 2}}\left\{ {{e^{\frac{1}{{\rho {\Omega _2}}}}}} \right.\left[ {{\mathop{\rm Ei}\nolimits} \left( {\frac{{ - 1}}{{\rho {a_1}{\Omega _2}}}} \right) - {\mathop{\rm Ei}\nolimits} \left( {\frac{{ - 1}}{{\rho {\Omega _2}}}} \right)} \right]\left( {\frac{{{\Omega _1}}}{{{a_2}{\Omega _1} - \xi }}} \right) \nonumber\\
  &+ \frac{{{e^{\frac{{{a_2}{\Omega _1}}}{{\rho {\Omega _2}\xi }}}}}}{\xi }\left[ {{\mathop{\rm Ei}\nolimits} \left( {\frac{{ - {a_2}{\Omega _1}}}{{\rho {\Omega _2}\xi }}} \right) - {\mathop{\rm Ei}\nolimits} \left( {\frac{{ - {a_2}\xi  - {a_1}{a_2}{\Omega _1}}}{{\rho {a_1}\xi {\Omega _2}}}} \right)} \right] \nonumber\\
 & \left. { \times \left( {\frac{{{a_1}{a_2}\Omega _1^2 + {a_2}{\Omega _1}\xi }}{{{a_2}{\Omega _1} - \xi }}} \right)} \right\} .
\end{align}

Similarly, combing \eqref{appro ergodic rate of D2 without directlink for HD} and \eqref{asymp rate of user1 for HD without directlink}, the asymptotic expression for ergodic sum rate of HD NOMA without direct link is expressed as
\begin{align}\label{ergodic sum rate without directlink for HD NOMA}
 R_{sum,nodir}^{HD,\infty } =& \frac{{ - 1}}{{2\ln 2}}\left( {1 + \frac{1}{{{a_1}\rho {\Omega _1}}}} \right)\left[ {\ln \left( {\frac{1}{{{a_1}\rho {\Omega _1}}}} \right) + C} \right] \nonumber \\
&  + \frac{{{e^{\frac{1}{{\rho {\Omega _2}}}}}}}{{2\ln 2}}\left[ {{\mathop{\rm Ei}\nolimits} \left( {\frac{{ - 1}}{{\rho {a_1}{\Omega _2}}}} \right) - {\mathop{\rm Ei}\nolimits} \left( {\frac{{ - 1}}{{\rho {\Omega _2}}}} \right)} \right]   .
\end{align}
\subsubsection{Throughput Analysis}\label{Seciton-A-3}
In this subsection, the throughput in delay-tolerant transmission for FD/HD NOMA are presented, respectively.
\paragraph{FD NOMA case}
In this mode, the throughput is determined by evaluating the ergodic rate. Using \eqref{ergodic rate of D1 without directlink for FD} and \eqref{ergodic rate of D2 without directlink for FD}, the system throughput of FD NOMA without direct link is given by
\begin{align}\label{Delay-tolerant Transmission nodirect for FD}
R_{t\_nodir}^{FD} = R_{{D_1}}^{FD} +R_{{D_2},nodir}^{FD}.
\end{align}
\paragraph{HD NOMA case}
Similar to \eqref{Delay-tolerant Transmission nodirect for FD}, using \eqref{ergodic rate for D1 with HD} and \eqref{ergodic rate of D2 without directlink for HD}, the system throughput of HD NOMA without direct link is given by
\begin{align}\label{Delay-tolerant Transmission nodirect for HD}
R_{t\_nodir}^{HD} = R_{{D_1}}^{HD} + R_{{D_2},nodir}^{HD}.
\end{align}

\subsection{User Relaying with Direct Link}
In this subsection, we investigate the ergodic rate of $D_{2}$ for FD/HD NOMA with direct link.
\subsubsection{Ergodic Rate of $D_{2}$}
Assume that the signal $x_{2}$ from relaying and direct link can be detected at ${D_{2}}$ as well as at $D_{1}$ for SIC. Moreover, considering the effect of RI between these two links, the achievable rate of $D_{2}$ is written as $R_{{D_2},dir}^{RI} = \log \left( {1 + \min \left( {{\gamma _{{D_2} \to {D_1}}},\gamma _{1,{D_2}}^{RI} + \gamma _{2,{D_2}}^{RI}} \right)} \right)$. For the sake of simplicity, the achievable rate for $D_{2}$ can be further written as $R_{{D_2},dir} = \log \left( {1 + \min \left( {{\gamma _{{D_2} \to {D_1}}},\gamma _{{D_2}}^{MRC}} \right)} \right)$. Hence, the ergodic rate of $D_{2}$ for FD NOMA is given by
\begin{align}\label{ergodic rate for D2}
R_{{D_2},dir}^{FD}  = \frac{1}{{\ln 2}}\int_0^\infty  {\frac{{1 - {F_{{X_2}}}\left( {{x_2}} \right)}}{{1 + {x_2}}}} d{x_2},
\end{align}
where ${X_2} = \min \left( {\frac{{{{\left| {{h_1}} \right|}^2}{a _2}\rho }}{{{{\left| {{h_1}} \right|}^2}{a _1}\rho  + \varpi {{\left| {{h_{LI}}} \right|}^2}\rho  + 1}},{{\left| {{h_2}} \right|}^2}\rho  + \frac{{{{\left| {{h_0}} \right|}^2}{a _2}\rho }}{{{{\left| {{h_0}} \right|}^2}{a _1}\rho  + 1}}} \right)$ with $\varpi =1$.
It is also difficult to obtain the CDF of $X_{2}$ and \eqref{ergodic rate for D2} has not closed-form expression. The following theorem provides the high SNR approximation.
\begin{theorem}\label{theorem:6 direct link rate for D2}
The asymptotic expression for ergodic rate of $D_{2}$ with direct link for FD NOMA in the high SNR region is given by
\begin{align}\label{appro ergodic rate of D2 with directlink for FD}
 R_{{D_2},dir}^{FD,\infty } =& \frac{1}{{\ln 2}}\left\{ {\ln \left( {1 + \frac{{{a _2}}}{{{a _1}}}} \right)\left( {\frac{{{\Omega _1}}}{{{a _2}{\Omega _1} - \xi }}} \right)} \right. \nonumber \\
& \left. { - \frac{1}{\xi }\ln \left( {1 + \frac{\xi }{{{a _1}{\Omega _1}}}} \right)\left( {\frac{{{a _2}{\Omega _1}\xi  + {a _1}{a _2}\Omega _1^2}}{{{a _2}{\Omega _1} - \xi }}} \right)} \right\} .
\end{align}
\begin{proof} See Appendix E.
\end{proof}
\end{theorem}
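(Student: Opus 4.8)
The plan is to start from the exact integral representation \eqref{ergodic rate for D2} and exploit the fact that $X_2$ is the minimum of two \emph{independent} quantities. Writing $A = \gamma_{{D_2} \to {D_1}}$, which by \eqref{SINR for D2 to detect D1} depends only on $h_1$ and $h_{LI}$, and $B = \gamma_{{D_2}}^{MRC}$, which by \eqref{MRC SINR for D2} depends only on $h_2$ and $h_0$, the independence of the four channel gains yields the product form $1 - F_{{X_2}}(x) = \Pr(A > x)\,\Pr(B > x)$. First I would substitute this factorisation into \eqref{ergodic rate for D2}, turning the unknown CDF of $X_2$ into a product of two one-dimensional tail probabilities.

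The key high-SNR observation is a dichotomy between the two branches. From \eqref{SINR for D2 to detect D1} one sees that $A < a_2/a_1$ for \emph{every} channel realisation and every $\rho$, so $\Pr(A>x)=0$ whenever $x \ge a_2/a_1$; the integration in \eqref{ergodic rate for D2} is therefore confined to the bounded interval $[0, a_2/a_1]$ for all $\rho$. By contrast, $B = \gamma_{{D_2}}^{MRC} \ge |h_2|^2 \rho$ from \eqref{MRC SINR for D2}, so $B \to \infty$ almost surely and $\Pr(B>x)\to 1$ for each fixed $x$ as $\rho\to\infty$. Since the integrand is dominated by the integrable function $\tfrac{1}{1+x}\mathbf{1}_{[0,a_2/a_1]}(x)$, dominated convergence lets me pass the limit inside the integral, collapsing $X_2$ to its relaying branch and replacing $A$ by its saturated form $\tfrac{|h_1|^2 a_2}{|h_1|^2 a_1 + |h_{LI}|^2}$ (the $+1$ in the denominator becoming negligible at high SNR).

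Next I would evaluate the survival function of the saturated relaying SINR. For $0 < x < a_2/a_1$ the event $A > x$ is equivalent to $|h_1|^2 > \tfrac{x\,|h_{LI}|^2}{a_2 - a_1 x}$; averaging the exponential tail of $|h_1|^2$ over the exponential density of $|h_{LI}|^2$ gives the elementary closed form $\Pr(A>x) = \tfrac{\Omega_1(a_2 - a_1 x)}{\xi x + a_2 \Omega_1}$, with $\xi = \Omega_{LI} - a_1 \Omega_1$ as in the statement. Substituting this into the reduced integral leaves $R_{{D_2},dir}^{FD,\infty} = \tfrac{1}{\ln 2}\int_0^{a_2/a_1} \tfrac{\Omega_1(a_2 - a_1 x)}{(1+x)(\xi x + a_2\Omega_1)}\,dx$.

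The remaining step is a partial-fraction decomposition of the rational integrand into a term proportional to $1/(1+x)$ and a term proportional to $1/(\xi x + a_2\Omega_1)$; the residues work out to $\tfrac{\Omega_1}{a_2\Omega_1-\xi}$ and $-\tfrac{a_2\Omega_1\xi + a_1 a_2 \Omega_1^2}{a_2\Omega_1-\xi}$ respectively (using $a_1 + a_2 = 1$ when evaluating at $x=-1$). Integrating each term produces logarithms; evaluating between $0$ and $a_2/a_1$ yields exactly $\ln(1+a_2/a_1)$ from the first piece and $\tfrac{1}{\xi}\ln(1+\xi/(a_1\Omega_1))$ from the second, which assembles into \eqref{appro ergodic rate of D2 with directlink for FD}. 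I expect the main obstacle to be the conceptual justification of the high-SNR collapse $X_2 \to A$: verifying that the MRC branch dominates while the relaying branch's hard upper bound $a_2/a_1$ keeps the integral over a fixed finite interval, so that the limit and the integral may legitimately be exchanged. Once that is in place, the conditional averaging and the partial-fraction integration are routine bookkeeping.
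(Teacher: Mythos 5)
Your proposal is correct and follows essentially the same route as the paper's Appendix E: collapse $X_2$ at high SNR to the saturated relaying SINR $\tfrac{|h_1|^2 a_2}{|h_1|^2 a_1 + |h_{LI}|^2}$ (bounded by $a_2/a_1$, which confines the integral to $[0,a_2/a_1]$ while the MRC branch's survival probability is identically one there), compute its survival function $\tfrac{\Omega_1(a_2-a_1 x)}{\xi x + a_2\Omega_1}$ by exponential averaging, and integrate the resulting rational function. Your dominated-convergence justification of the limit exchange and your single partial-fraction decomposition are slightly cleaner than the paper's unit-step bookkeeping and its split into two integrals $J_2-J_3$, but the substance is identical.
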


For $\varpi  = 0$, the ergodic rate of $D_{2}$ for HD NOMA with direct link is given by
\begin{align}\label{ergodic rate of D2 with directlink for HD}
 &R_{{D_2},dir}^{HD}  = \int_0^{\frac{{{a _2}}}{{{a _1}}}} {\frac{{{e^{ - \frac{{y\left( {{\Omega _0} + {\Omega _1}} \right)}}{{\rho \left( {{a _2} - {a _1}y} \right){\Omega _0}{\Omega _1}}}}}}}{{1 + y}}} dy \nonumber \\
  &+ \int_0^{\frac{{{a _2}}}{{{a _1}}}} {\int_0^{\frac{y}{{\rho \left( {{a _2} - {a _1}y} \right)}}} {\frac{{{e^{ - \frac{x}{{{\Omega _0}}} - \frac{{{\gamma _{th_2}^{HD}}\left( {x{a _1}\rho  + 1} \right) - x{a _2}\rho }}{{\rho \left( {x{a _1}\rho  + 1} \right){\Omega _2}}} - \frac{y}{{\rho \left( {{a _2} - {a _1}y} \right){\Omega _1}}}}}}}{{\left( {1 + y} \right){\Omega _0}}}} dx} dy .
\end{align}
To obtain the closed-form expression of ergodic rate for $D_{2}$, the complicated integrals are required to be computed. Corollary 5 gives an efficient high SNR approximation.
\begin{corollary}\label{high_SNR_sum_ergodic_rate_directlink:5}
The asymptotic expression for ergodic rate of $D_{2}$ with direct link for HD NOMA in the high SNR region is given by
\begin{align}\label{appro ergodic rate of D2 with direct link for HD NOMA}
 R_{{D_2},dir}^{HD,\infty } = \frac{{\rm{1}}}{{\rm{2}}}\log \left( {1 + \frac{{{a _2}}}{{{a _1}}}} \right).
\end{align}
\end{corollary}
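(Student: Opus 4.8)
The plan is to start from the exact integral representation \eqref{ergodic rate of D2 with directlink for HD} of $R_{{D_2},dir}^{HD}$ and drive $\rho\to\infty$. Denote the first (single) integral, the direct-link contribution, by $I_1=\int_0^{a_2/a_1}\frac{1}{1+y}\exp\left(-\frac{y(\Omega_0+\Omega_1)}{\rho(a_2-a_1 y)\Omega_0\Omega_1}\right)dy$, and denote the remaining double integral by $I_2$. The claim \eqref{appro ergodic rate of D2 with direct link for HD NOMA} will follow once I show that $I_1$ converges to a finite constant while $I_2$ vanishes in the limit, after which the half-duplex/base-$2$ prefactor $\tfrac{1}{2\ln 2}$ (the same one appearing in $R_{{D_1}}^{HD}$ of \eqref{ergodic rate for D1 with HD}) converts the result into the stated $\tfrac12\log\!\left(1+\tfrac{a_2}{a_1}\right)$.

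For $I_1$ I would invoke dominated convergence. For every fixed $y\in(0,a_2/a_1)$ the exponent is $O(1/\rho)$, so the exponential factor tends to $1$ pointwise as $\rho\to\infty$, and the integrand is everywhere bounded by the fixed function $\frac{1}{1+y}$, which is integrable on the finite interval $[0,a_2/a_1]$. Hence $I_1\to\int_0^{a_2/a_1}\frac{dy}{1+y}=\ln\!\left(1+\tfrac{a_2}{a_1}\right)$, which is precisely the $\ln$ that becomes $\log$ after dividing by $\ln 2$.

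For $I_2$ the key structural fact is that the inner integration variable $x$ ranges only over $\left[0,\tfrac{y}{\rho(a_2-a_1 y)}\right]$, an interval whose length collapses to zero as $\rho\to\infty$ for each fixed $y<a_2/a_1$. Since the inner exponent is nonpositive, the inner integrand is bounded above by $\frac{1}{(1+y)\Omega_0}$, so the inner integral is $O(1/\rho)$; integrating the bound in $y$ over the finite outer range leaves $I_2=O(1/\rho)\to0$. Combining $I_1+I_2\to\ln\!\left(1+\tfrac{a_2}{a_1}\right)$ with the $\tfrac{1}{2\ln 2}$ prefactor yields \eqref{appro ergodic rate of D2 with direct link for HD NOMA}.

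The main obstacle is the endpoint $y=a_2/a_1$, where $a_2-a_1y\to 0^+$ makes the exponent of $I_1$ diverge to $-\infty$ and the pointwise limit $\frac{1}{1+y}$ fails; the same factor also governs how fast the inner range of $I_2$ shrinks. I would handle this by noting that the breakdown is confined to a boundary layer of width $O(1/\rho)$ near $y=a_2/a_1$, whose contribution to $I_1$ is negligible, so that convergence holds almost everywhere and the $L^1$ domination by $\frac{1}{1+y}$ legitimizes the interchange of limit and integral; conversely, that very endpoint shrinking is exactly what forces $I_2$ to decay at rate $1/\rho$. A careful but routine estimate of the integrals $\int_0^{a_2/a_1}\frac{y}{(1+y)(a_2-a_1y)}\,dy$-type terms confirms both the $O(1/\rho)$ decay of $I_2$ and the negligibility of the boundary layer.
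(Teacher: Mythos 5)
Your proposal is correct in strategy and in its conclusion, but it takes a genuinely different route from the paper. The paper never manipulates the exact integral representation \eqref{ergodic rate of D2 with directlink for HD}: following the same pattern as its proofs of Corollary \ref{high_SNR_sum_ergodic_rate_nodirectlink:4} and Theorem \ref{theorem:6 direct link rate for D2} (Appendices D and E), it approximates the SINRs themselves at high SNR. In HD mode there is no loop interference, so $\gamma_{D_2\to D_1}=\frac{|h_1|^2a_2\rho}{|h_1|^2a_1\rho+1}\to\frac{a_2}{a_1}$, while $\gamma_{D_2}^{MRC}=|h_2|^2\rho+\frac{|h_0|^2a_2\rho}{|h_0|^2a_1\rho+1}\approx|h_2|^2\rho+\frac{a_2}{a_1}\ge\frac{a_2}{a_1}$; hence the minimum inside the achievable rate degenerates to the deterministic constant $\frac{a_2}{a_1}$ and \eqref{appro ergodic rate of D2 with direct link for HD NOMA} follows in one line, with no integration at all (it is the $\varpi=0$ specialization of the step $Y=\min\left(\frac{|h_1|^2a_2}{|h_1|^2a_1+|h_{LI}|^2},\,|h_2|^2\rho+\frac{a_2}{a_1}\right)$ in Appendix E). Your dominated-convergence argument on the exact expression is heavier, but it buys two things the paper's shortcut does not: a rigorous justification of interchanging the high-SNR limit with the integration, and the observation that \eqref{ergodic rate of D2 with directlink for HD} as printed is missing the $\frac{1}{2\ln 2}$ prefactor that appears in its no-direct-link counterpart \eqref{ergodic rate of D2 without directlink for HD} --- your reading with that prefactor restored is the only one consistent with the stated corollary.

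Two technical slips in your treatment of $I_2$, both repairable. First, the inner exponent is not always nonpositive: its middle term has numerator $\gamma_{th_2}^{HD}(xa_1\rho+1)-xa_2\rho$, which becomes negative for $y>\gamma_{th_2}^{HD}$ near the upper inner limit (recall $\gamma_{th_2}^{HD}<a_2/a_1$). The resulting positive contribution to the exponent is, however, at most $\frac{xa_2}{(xa_1\rho+1)\Omega_2}\le\frac{a_2}{a_1\rho\Omega_2}$, uniformly $O(1/\rho)$, so the inner integrand is still bounded by a constant multiple of $\frac{1}{(1+y)\Omega_0}$. Second, the comparison integral you invoke, $\int_0^{a_2/a_1}\frac{y}{(1+y)(a_2-a_1y)}\,dy$, actually diverges logarithmically at $y=a_2/a_1$, so integrating your pointwise bound in $y$ does not yield $I_2=O(1/\rho)$. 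The repair is the boundary-layer split you already sketched for $I_1$: on $\left[\frac{a_2}{a_1}-\frac{1}{\rho},\frac{a_2}{a_1}\right]$ bound the inner integral by $\frac{1}{\Omega_0}\int_0^\infty e^{-x/\Omega_0}dx=1$ (contribution $O(1/\rho)$), and on the remainder integrate the pointwise bound (contribution $O(\log\rho/\rho)$). This gives $I_2=O(\log\rho/\rho)$, slightly weaker than the rate you claimed but still vanishing, so your conclusion stands.
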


Combing \eqref{asymp rate of user1 for FD without directlink} and \eqref{appro ergodic rate of D2 with directlink for FD}, the asymptotic expression for the ergodic sum rate of FD NOMA with direct link is expressed as
\begin{align}\label{ergodic sum rate with directlink for FD NOMA}
& R_{sum,dir}^{FD,\infty } = \frac{{{a_1}{\Omega _1}}}{{\ln 2\left( {{\Omega _{LI}} - {a_1}{\Omega _1}} \right)}}\left[ {\left( {1 + \frac{1}{{{a_1}\rho {\Omega _1}}}} \right)} \right.\left( {\ln \left( {\frac{1}{{{a_1}\rho {\Omega _1}}}} \right)} \right. \nonumber\\
& \left. { + C} \right)\left. { - \left( {1 + \frac{1}{{\rho {\Omega _{LI}}}}} \right)\left( {\ln \left( {\frac{1}{{\rho {\Omega _{LI}}}}} \right) + C} \right)} \right] + \frac{1}{{\ln 2}} \nonumber\\
  &\times \left[ {\ln \left( {1 + \frac{{{a_2}}}{{{a_1}}}} \right)\left( {\frac{{{\Omega _1}}}{{{a_2}{\Omega _1} - \xi }}} \right) - \frac{1}{\xi }\ln \left( {1 + \frac{\xi }{{{a_1}{\Omega _1}}}} \right)} \right. \nonumber\\
 &\left. { \times \left( {\frac{{{a_2}{\Omega _1}\xi  + {a_1}{a_2}\Omega _1^2}}{{{a_2}{\Omega _1} - \xi }}} \right)} \right]
\end{align}

In the same way, combing \eqref{asymp rate of user1 for HD without directlink} and \eqref{appro ergodic rate of D2 with direct link for HD NOMA}, the asymptotic expression for ergodic sum rate of HD NOMA with direct link is expressed as
\begin{align}\label{ergodic sum rate with directlink for HD NOMA}
 R_{sum,dir}^{HD,\infty } = &\frac{{ - 1}}{{2\ln 2}}\left( {1 + \frac{1}{{{a_1}\rho {\Omega _1}}}} \right)\left[ {\ln \left( {\frac{1}{{{a_1}\rho {\Omega _1}}}} \right) + C} \right] \nonumber\\
  &+ \frac{{\rm{1}}}{{\rm{2}}}\log \left( {1 + \frac{{{a_2}}}{{{a_1}}}} \right) .
\end{align}

\subsubsection{Slope Analysis}
Based on the derived asymptotic ergodic rates, the high SNR slopes of $D_{2}$ with direct link are characterized in the following.
\paragraph{$D_{2}$ for FD NOMA case}\label{User2 for FD NOMA case slop with directlink}
Substituting \eqref{appro ergodic rate of D2 with directlink for FD} into \eqref{high SNR slop}, we can obtain $S_{{D_2},dir}^{FD}  = 0$.
\paragraph{$D_{2}$ for HD NOMA case}
Such as \eqref{User2 for FD NOMA case slop with directlink}, substituting \eqref{appro ergodic rate of D2 with direct link for HD NOMA} into \eqref{high SNR slop}, we can obtain $ S_{{D_2},dir}^{HD} = 0$.
\begin{remark}\label{Remark 6}
Based on above derived results, the ergodic rate of $D_{2}$ also converge to a throughput ceiling in the high SNR region with direct link for FD/HD NOMA. The user of direct link is incapable of assisting $D_{2}$ to obtain the additional high SNR slope.
\end{remark}

\subsubsection{Throughput Analysis}\label{Seciton-IV-B-3}
\paragraph{FD NOMA case}
As suggested in Section \ref{Seciton-A-3}, in delay-tolerant transmission mode, using \eqref{ergodic rate of D1 without directlink for FD} and \eqref{ergodic rate for D2}, the system throughput for FD NOMA with direct link is given by
\begin{align}\label{Delay-tolerant Transmission direct for FD}
R_{t\_dir}^{FD} = R_{{D_1}}^{FD} +  R_{{D_2},dir}^{FD}.
\end{align}
\paragraph{HD NOMA case}
Similar to \eqref{Delay-tolerant Transmission direct for FD}, using \eqref{ergodic rate for D1 with HD} and \eqref{ergodic rate of D2 with directlink for HD}, the system throughput for HD NOMA with direct link is given by
\begin{align}\label{Delay-tolerant Transmission direct for HD}
R_{t\_dir}^{HD} = R_{{D_1}}^{HD} + R_{{D_2},dir}^{HD}.
\end{align}

As shown in TABLE I, the diversity orders and high SNR slopes of two users for FD/HD NOMA are summarized to illustrate the comparison between them. In TABLE I, we use ``D'' and ``S'' to represent the diversity order and high SNR slope, respectively.
\begin{table}[!h]
\begin{center}
{\tabcolsep12pt\begin{tabular}{|l|l|l|l|l|}\hline   
  \textbf{Duplex mode} & \textbf{Link mode}  &\textbf{User} & \textbf{D} & \textbf{S} \\
     \hline
\multirow{4}{*}{FD NOMA}    & \multirow{2}{*}{Nodirect}   & User1 & $ 0$ & $  0$ \\
\cline{3-5}
                     &  & User2  & $ 0$ & $ 0$  \\
\cline{2-5}
                     & \multirow{2}{*}{Direct} & User1  & $0$   & $ 0$  \\
\cline{3-5}
               &  &  User2 & $ 1$   & $ 0$ \\
\hline
\multirow{4}{*}{HD NOMA}  & \multirow{2}{*}{Nodirect} & User1& $ 1$ & $ \frac{1}{2} $ \\
\cline{3-5}
               &  & User2& $ 1$ & $ 0   $ \\
\cline{2-5}
               &\multirow{2}{*}{Direct}  &User1 & $ 1$   & $\frac{1}{2}$  \\
\cline{3-5}
               &  &User2 & $ 2$   & $  0$ \\
\hline

\end{tabular}}{}
\label{tab1}
\end{center}
\caption{Diversity order and high SNR slope for FD/HD NOMA systems.}
\end{table}

\section{Energy Efficiency}\label{Energy Efficiency}
 Based on throughput analysis, we aim to provide the system energy efficiency (EE) considering user relaying for FD/HD NOMA systems.

 The definition of energy efficiency is given by
\begin{align}
{\eta _{EE}} = \frac{{{\rm{Total~data~rate}}}}{{{\rm{Total~energy~consumption}}}}.
\end{align}

For FD/HD NOMA energy efficiency, the total data rate is denotes as sum throughput from the BS to $D_{1}$ and $D_{2}$ and from $D_{1}$ to $D_{2}$. The total power consumption is denoted as the sum of the transmitted power $P_{s}$ at the BS and $P_{r}$ at $D_{1}$. Based on results in Section \ref{SecitonIII-A-4}, \ref{SecitonIII-B-3} and \ref{Seciton-A-3}, \ref{Seciton-IV-B-3}, the energy efficiency of user relaying for FD/HD NOMA systems are expressed as
\begin{align}\label{EE FD for limited}
{\eta _\Phi^{FD} } = \frac{{{R_\Phi^{FD} }}}{{T{P_s} + T{P_r}}},
\end{align}
and
\begin{align}\label{EE HD for limited}
\eta _\Phi ^{HD} = \frac{{2R_\Phi ^{HD}}}{{T{P_s} + T{P_r}}},
\end{align}
respectively. $T$ denotes the transmission time for the entire communication process. $\Phi  \in \left( {l\_nodir,l\_dir,t\_nodir,t\_dir} \right)$. ${\eta _{l\_nodir}}$ and ${\eta _{l\_dir}}$ are system energy efficiencies without/with direct link in delay-limited transmission mode, respectively. ${\eta _{t\_nodir}}$ and ${\eta _{t\_dir}}$ are system energy efficiencies without/with direct link in delay-tolerant transmission mode, respectively.

\section{Numerical Results}\label{Numerical Results}
In this section, simulation results are provided to validate our analytical expressions derived in the previous section, and further evaluate the performance of FD/HD user relaying in NOMA systems. Without loss of generality, we assume that the distance between BS and $D_{2}$ is normalized to unity, i.e. ${\Omega _{{0}}} = 1$. ${\Omega _{{1}}} = {d^{ - \alpha }}$ and ${\Omega _{2}} = {\left( {1 - d} \right)^{ - \alpha }}$, where $d$ is the normalised distance between BS and $D_{1}$
setting to be $d=0.3$ and $\alpha $ is the pathloss exponent setting to be $\alpha=2$. The power allocation coefficients of NOMA are ${a_1}=0.2$ and ${a _2}=0.8$ for $D_{1}$ and $D_{2}$, respectively.
\subsection{Without Direct Link}
For user relaying without direct link, the target rate is set to be ${R_1}=3$, ${R_2}=0.5$ bit per channel use (BPCU) for $D_{1}$ and $D_{2}$, respectively. The performance of conventional OMA is shown as a benchmark for comparison, in which the total communication process is finished in three slots. In the first slot, the BS sends information $x_{1}$ to $D_{1}$ and sends $x_{2}$ to $D_{1}$ in the second slot. In the last slot, $D_{1}$ decodes and forwards the information $x_{2}$ to $D_{2}$.
\subsubsection{Outage Probability}
Fig. \ref{10_nodirect_HD_FD_15DB} plots the outage probability of two users versus SNR without direct link and the value of LI is assumed to be $\mathbb{E}\{|h_{LI}|^2\}= -15$ dB. The exact theoretical curves for the outage probability of two users for FD/HD NOMA are plotted according to \eqref{OP derived for D1 without directlink}, \eqref{exect OP for D2 no directlink} and \eqref{corollary1 derived for D1 no directlink for HD NOMA }, \eqref{corollary1 derived for D2 no directlink for HD NOMA }, respectively. Obviously, the exact outage probability curves match precisely with the Monte Carlo simulation results. It is observed that the outage performance of FD NOMA exceeds HD NOMA and OMA on the condition of low SNR region. This is because LI is not dominant impact factor in the low SNR region for FD NOMA and answers the first question we raised in the introduction part. Especially, one can also observed that the outage behavior of $D_2$ for HD NOMA outperforms HD OMA ~\cite[Eq. (8)]{Kwon2010Optimal}.
The asymptotic outage probability curves of two users for HD NOMA are plotted according to \eqref{appro OP of user1 for HD without directlink}  and \eqref{appro OP of user2 for HD without directlink }, respectively. The asymptotic curves well approximate the exact performance curves in the high SNR region. It is shown that error floors exist in FD NOMA, which verify the conclusion in \textbf{Remark \ref{error floor existent for D1 and D2 without direct link}} and obtain zero diversity order. This is due to the fact that there is loop interference in FD NOMA. Another observation is that HD NOMA and OMA are superior to FD NOMA in the high SNR region. Therefore, we can select different operation mode for user relaying according to the different SNR levels in practical cooperative NOMA systems.
\begin{figure}[t!]
    \begin{center}
        \includegraphics[width=3.48in,  height=2.8in]{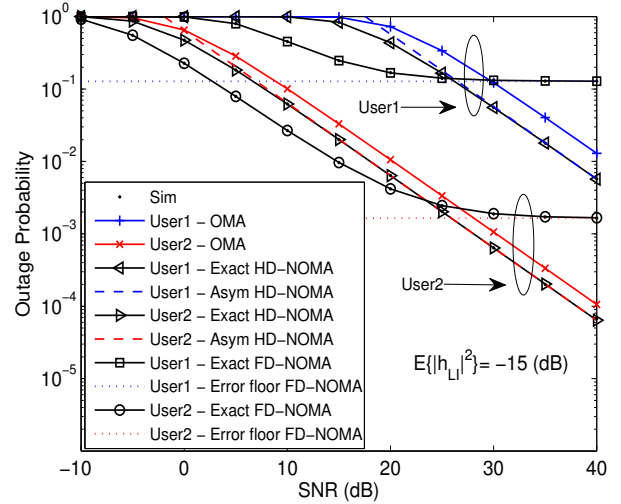}
        \caption{Outage probability versus the transmit SNR without direct link.}
        \label{10_nodirect_HD_FD_15DB}
    \end{center}
\end{figure}
\begin{figure}[t!]
    \begin{center}
        \includegraphics[width=3.48in,  height=2.8in]{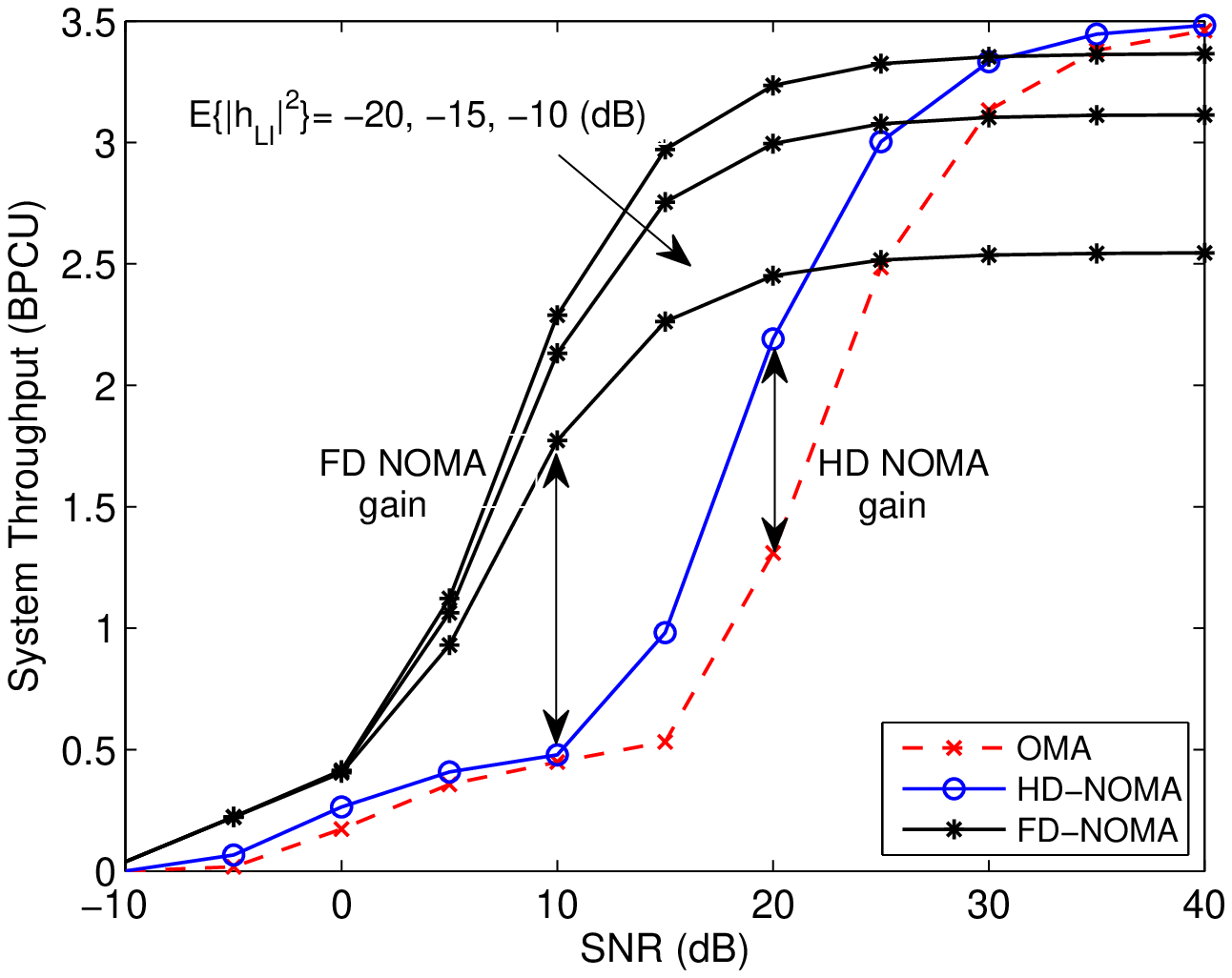}
        \caption{System throughput in delay-limited transmission mode versus SNR with different values of LI without direct link.}
        \label{nodirect_throughput_new}
    \end{center}
\end{figure}

\begin{figure}[t!]
    \begin{center}
        \includegraphics[width=3.48in,  height=2.8in]{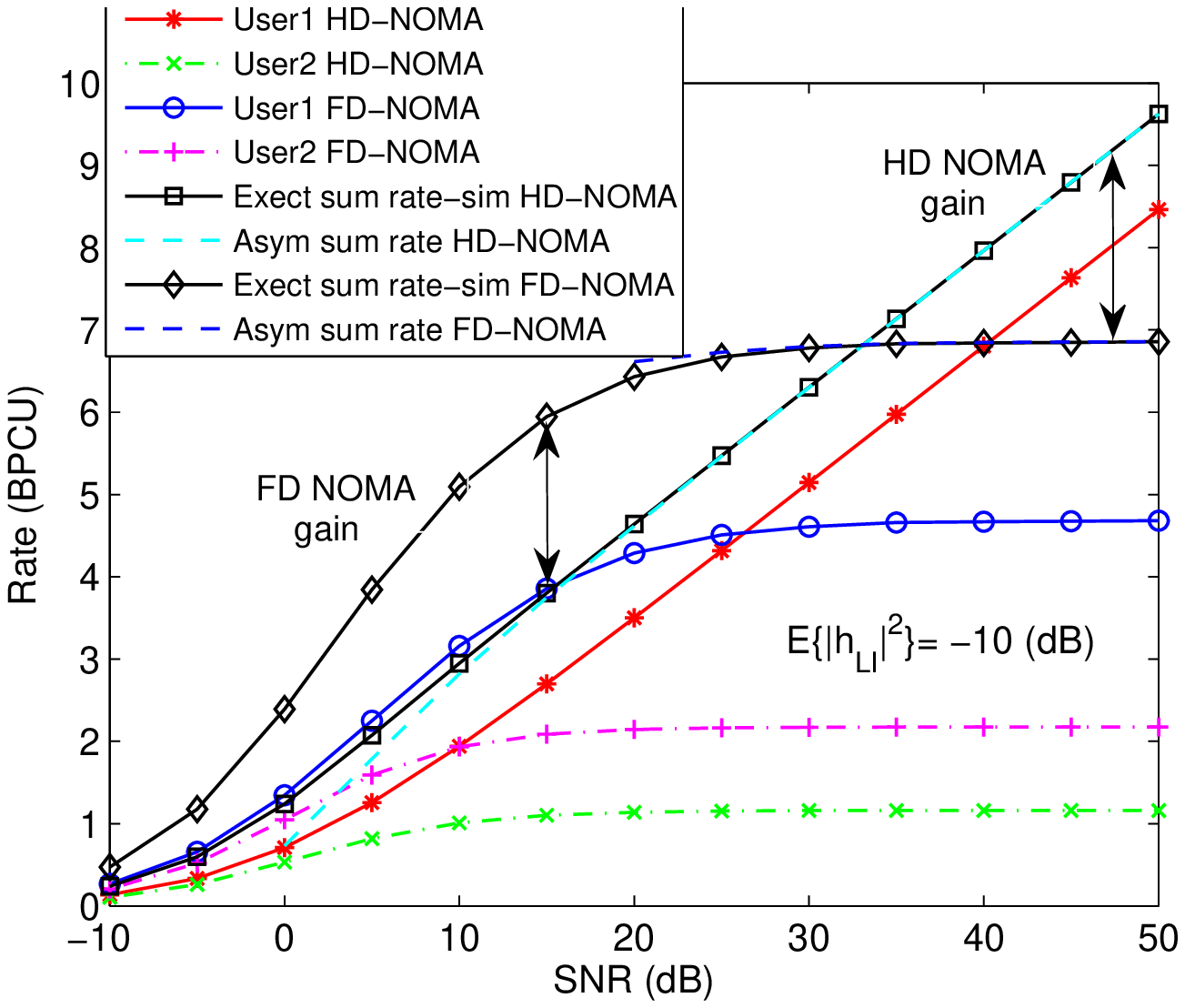} 
        \caption{Rates versus the transmit SNR without direct link.}
        \label{nodirect_rate_HD_FD}
    \end{center}
\end{figure}
\begin{figure}[t!]
    \begin{center}
        \includegraphics[width=3.48in,  height=2.8in]{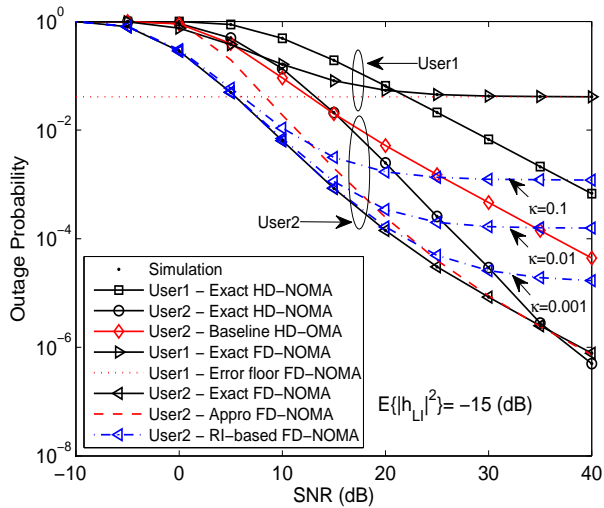} 
        \caption{Outage probability versus the transmit SNR with direct link.}
        \label{OP_withdirect_HD_FD_5}
    \end{center}
\end{figure}
Fig. \ref{nodirect_throughput_new} plots the system throughput versus SNR in delay-limited transmission mode without direct link. The solid curves represent throughput for FD/HD NOMA without direct link which are obtained from \eqref{Delay-limited Transmission nodirect for FD} and \eqref{Delay-limited Transmission nodirect for HD}, respectively.
We can observe that FD NOMA achieves a higher throughput than HD NOMA and OMA, since FD NOMA has the low values of LI. It is worth noting that increasing the values of LI from $-20$ dB to $-10$ dB reduce the system throughput in high SNR region. This is because FD NOMA converges to an error floor in high SNR region.
\subsubsection{Ergodic Rate}
Fig. \ref{nodirect_rate_HD_FD} plots the ergodic sum rate of FD/HD NOMA without direct link versus SNR and the value of LI is assumed to be $\mathbb{E}\{|h_{LI}|^2\}=-10$ dB. The blue and red solid curves denote the achievable rates of $D_{1}$ for FD/HD NOMA, respectively. The dashdotted curves denote the achievable rates of $D_{2}$ for FD/HD NOMA, respectively. One can observe that the achievable rate of $D_{1}$ for FD NOMA is superior to HD NOMA in the low SNR region. This phenomenon can be also explained is that LI has little effect on achievable rate of $D_{1}$ in the low SNR region. On the contrary, due to the influence of LI, the ergodic rate of $D_{1}$ converge to a throughput ceiling in the high SNR region. Another observation is that the achievable rate of $D_{2}$ for FD NOMA exceeds the HD NOMA. This is due to the fact that the communication process is completed over one slot time for FD NOMA. It is also shown that throughput ceilings exist in FD/HD NOMA for $D_{2}$, which verify the conclusion in \textbf{Remark \ref{ceiling for D2 without direct link}}. The dashed curves denote the asymptotic ergodic sum rate for FD/HD NOMA, corresponding to the analytical results derived in \eqref{ergodic sum rate without directlink for FD NOMA} and \eqref{ergodic sum rate without directlink for HD NOMA}, respectively. An important observation is that FD NOMA can achieve the maximal ergodic sum rate corresponding to HD NOMA and OMA in the low SNR region. The reason is that FD NOMA can improve system spectrum efficiency in the low SNR region. This phenomenon answers the third question we raised in the introduction part.

\subsection{With Direct Link}
For user relaying with direct link, the target rate is set to be ${R_1}=2$, ${R_2}=1$ BPCU for $D_{1}$ and $D_{2}$, respectively.  The performance of conventional HD NOMA is shown as a benchmark for comparison, which has been discussed in \cite{Ding2014Cooperative}.
\subsubsection{Outage Probability}
Fig. \ref{OP_withdirect_HD_FD_5} plots the outage probability of two users versus SNR and the value of LI is assumed to be $\mathbb{E}\{|h_{LI}|^2\}=-15$ dB. The exact outage probability curves of two users for FD NOMA are given by Monte Carlo simulations and perfectly match with the analytical results derived in \eqref{OP derived for D1 without directlink} and \eqref{OP derived for D2}. The approximated outage probability curve for $D_{2}$ is plotted according to \eqref{appro OP for user2} is practically indistinguishable from the exact expression. We observe that $D_{2}$ obtains one diversity order by using the direct link, which overcomes the problem of zero diversity order inherent to FD cooperative systems. This phenomenon answers the second question we raised in the introduction part. More importantly, one can observe that the performance of FD NOMA is superior to HD NOMA in the low SNR region, whilst the performance is inferior to HD NOMA in the high SNR region.
Additionally, for $D_2$, considering the impact of RI between relaying link and direct link, we plots the corresponding outage probability of $D_2$ based on \eqref{express OP for D2 adding RI} denoted by blue dash-dotted curves.  As can be seen from Fig. \ref{OP_withdirect_HD_FD_5}, these simulation results almost match with analytical result derived in \eqref{OP derived for D2} by utilizing the upper bound SINR in low SNR region. However, with the increase of RI levels $\kappa$, the RI-based simulation results for $D_2$ converge to a constant and provide an error floor in high SNR region. Hence, the effect of RI should be carefully addressed in practical FD NOMA systems. Another observation is that the outage behavior of $D_2$ for FD/HD NOMA outperforms HD OMA~\cite[Eq. (13)]{Bai7097002}. That is due to the fact that NOMA can provide more spectral efficiency compared to OMA.

Fig. \ref{OP_withdirect_HD_FD_LI} plots the outage probability of the two users versus different values of LI from $-20$ dB to $-10$ dB. We see that LI strongly affect the performance of FD NOMA systems. With the values of LI increasing, the superiority of FD NOMA is no longer apparent. Therefore, it is important to consider the influence of LI when designing practical FD NOMA systems.
\begin{figure}[t!]
    \begin{center}
        \includegraphics[width=3.48in,  height=2.8in]{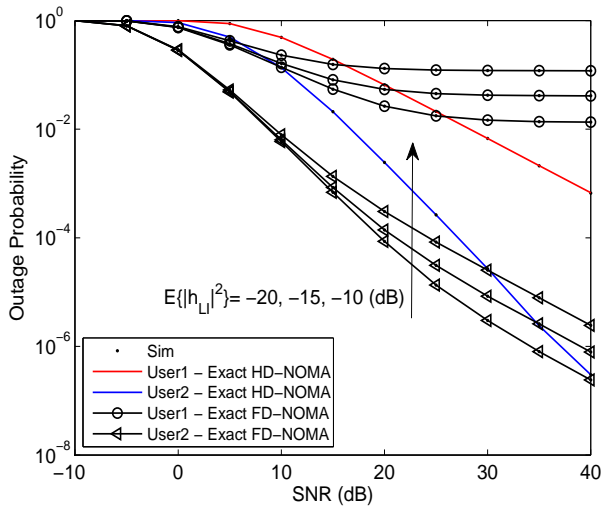}
        \caption{Outage probability versus the transmit SNR for different values of LI with direct link.}
        \label{OP_withdirect_HD_FD_LI}
    \end{center}
\end{figure}
Fig. \ref{FD_HD_NOMA_OP_throughput} plots system throughput versus SNR in delay-limited transmission mode with direct link. The solid curves, representing FD NOMA, is obtained from \eqref{Delay-limited Transmission direct for FD}. The dashed curve, representing HD NOMA, is obtained from \eqref{Delay-limited Transmission direct for HD}.
Observe that FD NOMA also outperform HD NOMA in the low SNR region. The reason is that in low SNR region, the outage probability is small and has no effect on the throughput, which only depends on the fixed transmission rates at the BS.
\begin{figure}[t!]
    \begin{center}
        \includegraphics[width=3.48in,  height=2.8in]{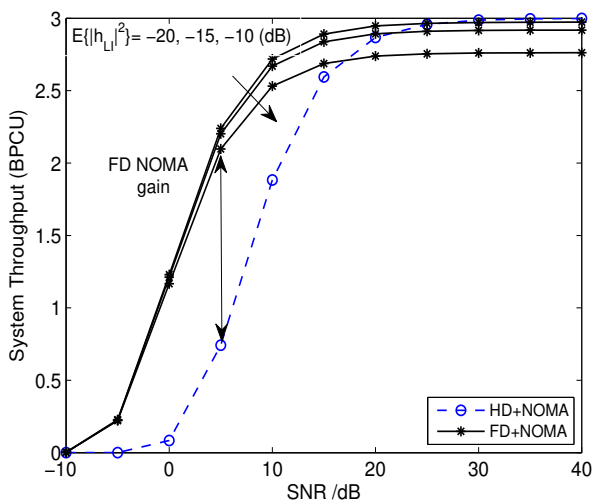}
       \caption{System throughput in delay-limited transmission mode versus SNR with different LI with direct link.}
        \label{FD_HD_NOMA_OP_throughput}
    \end{center}
\end{figure}
\begin{figure}[t!]
    \begin{center}
        \includegraphics[width=3.48in,  height=2.8in]{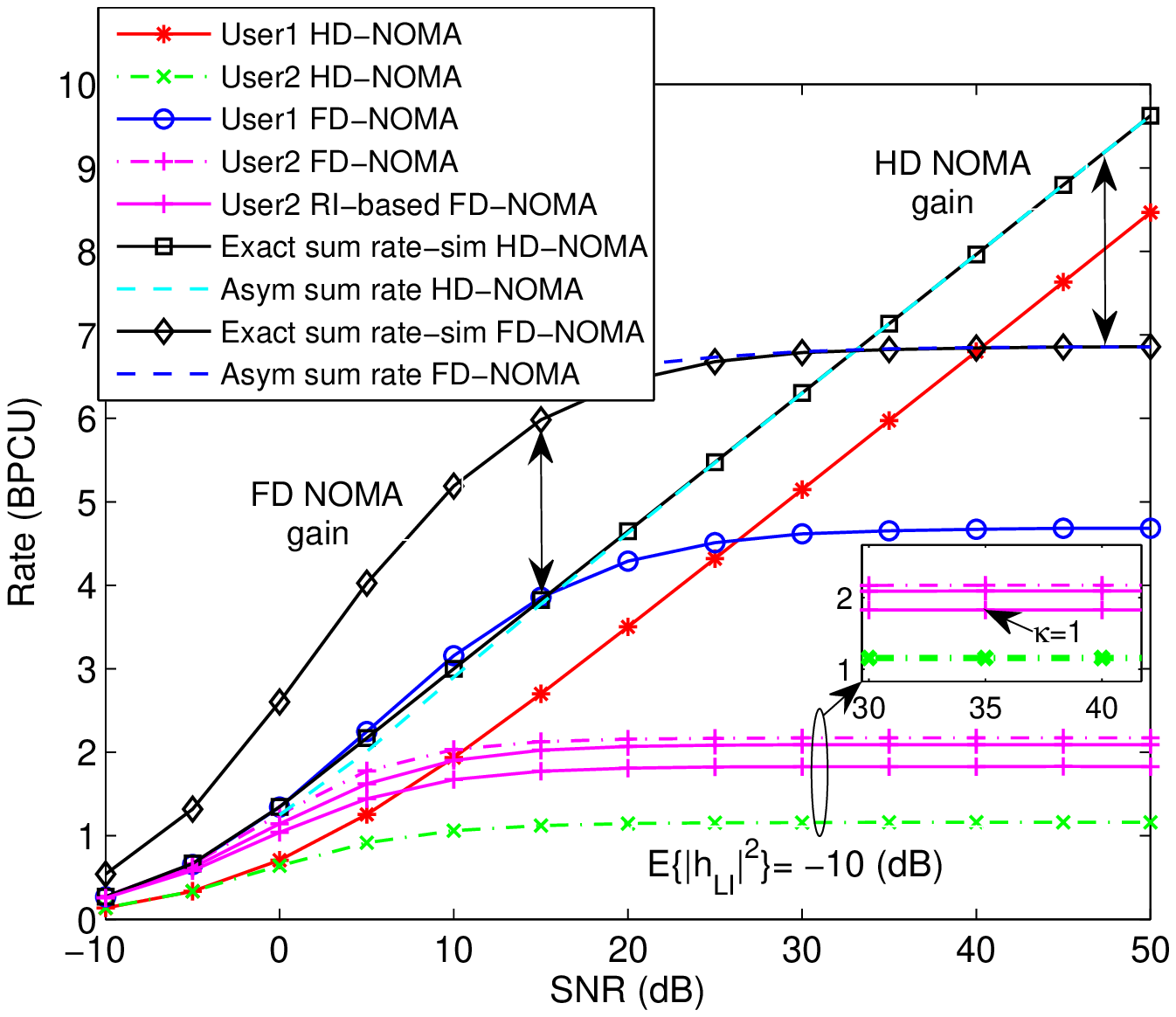}
        \caption{Rates versus the transmit SNR with direct link.}
        \label{direct_rate_HD_FD}
    \end{center}
\end{figure}
\begin{figure}[t!]
    \begin{center}
        \includegraphics[width=3.48in,  height=2.8in]{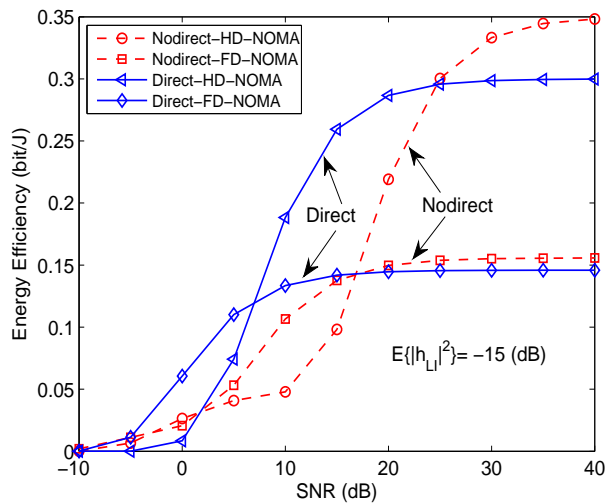}
        \caption{System energy efficiency in delay-limited transmission mode, where $P_{s}=P_{r}=$10 W, and $T=1$.}
        \label{EE_limited}
    \end{center}
\end{figure}
\begin{figure}[t!]
    \begin{center}
        \includegraphics[width=3.48in,  height=2.8in]{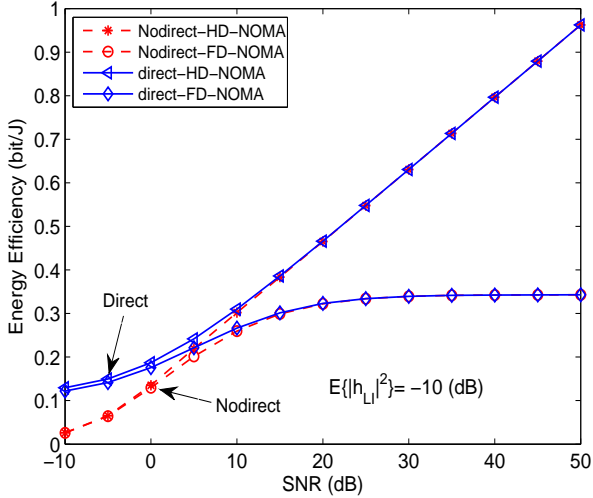}
        \caption{System energy efficiency in delay-tolerant transmission mode, where $P_{s}=P_{r}=$10 W, and $T=1$.}
        \label{EE_tolerant}
    \end{center}
\end{figure}

\subsubsection{Ergodic Rate}
Fig. \ref{direct_rate_HD_FD} plots the ergodic sum rate of HD/FD NOMA with direct link versus SNR and the value of LI is assumed to be $\mathbb{E}\{|h_{LI}|^2\}=-10$ dB. The dashed curves denote the asymptomatic ergodic sum rate for FD/HD NOMA based on the analytical results derived in \eqref{ergodic sum rate with directlink for FD NOMA} and \eqref{ergodic sum rate with directlink for HD NOMA}, respectively. It is observed that the asymptomatic ergodic sum rate is larger for FD/HD NOMA in the low SNR region. This can be explained as the direct link between BS and $D_{2}$ exists and improves system reliability.
One can observed from figure, as the RI value increases, the achievable rate of $D_2$ becomes smaller, such as, setting $\kappa$ from 0.5 to 1. In addition to consider the effect of LI, it is also important to design effective rake receiver at $D_2$ for FD NOMA system.

\subsection{Energy Efficiency}
Fig. \ref{EE_limited} plots the system energy efficiency versus SNR in delay-limited transmission mode for user relaying in NOMA systems. The dashed curves, representing user relaying without direct link for FD/HD NOMA are obtained from \eqref{EE FD for limited}, \eqref{Delay-limited Transmission nodirect for FD} and \eqref{EE HD for limited}, \eqref{Delay-limited Transmission nodirect for HD} with throughput in delay-limited transmission mode, respectively. The solid curves, representing representing user relaying with direct link for FD/HD NOMA are obtained from \eqref{EE FD for limited}, \eqref{Delay-limited Transmission direct for FD} and \eqref{EE HD for limited}, \eqref{Delay-limited Transmission direct for HD} with throughput in transmission mode, respectively. It can be seen that the energy efficiency of user relaying for FD/HD NOMA in delay-limited transmission mode is FD $>$ HD in the low SNR region. The reason is that FD NOMA can achieve larger throughput than that of HD NOMA at this transmission mode. This phenomenon answers the fourth question we raised in the introduction part.

Fig. \ref{EE_tolerant} plots the system energy efficiency versus SNR in delay-tolerant transmission mode for user relaying in NOMA systems. The dashed curves, representing user relaying without direct link for FD/HD NOMA are obtained from \eqref{ergodic sum rate without directlink for FD NOMA}, \eqref{EE FD for limited} and \eqref{ergodic sum rate without directlink for HD NOMA}, \eqref{EE HD for limited} with throughput in delay-tolerant mode, respectively. The solid curves, representing user relaying with direct link for FD/HD NOMA are obtained from \eqref{ergodic sum rate with directlink for FD NOMA}, \eqref{EE FD for limited} and \eqref{ergodic sum rate with directlink for HD NOMA}, \eqref{EE HD for limited} with throughput in delay-tolerant mode, respectively. We observe that
user relaying with direct link has a higher energy efficiency compared to without direct link for FD/HD NOMA in the low SNR region. This is because that the direct link improves system throughput at this transmission mode. Additionally, it is worth noting that HD NOMA achieves the higher system energy efficiency in the high SNR region. This is due to the fact that HD NOMA can provide a larger system throughput, while FD NOMA converges to the throughput ceiling in the high SNR region.

\section{Conclusion}\label{Conclusion}
This paper has investigated FD/HD user relaying in cooperative NOMA system and two cooperative relaying scenarios have been considered insightfully. The performance of FD/HD user relaying for NOMA system was characterized. The closed-form expressions of outage probability for two users have been derived. Due to the influence of LI, the diversity orders achieved by two user were zeros for FD NOMA. Therefore, the direct link between BS and far user was utilized to convey information and one diversity order was obtained for the far user. Based on the analytical results, it was shown that FD NOMA was superior to HD NOMA in low SNR region rather than in the high SNR region. The superior of FD NOMA was not apparent with the values of LI increasing. Furthermore, the expressions of ergodic sum rate for FD/HD user relaying were derived. The results showed that FD NOMA achieved a higher sum rate than HD NOMA in the low SNR region. In addition, the system energy efficiencies for FD/HD user relaying were discussed in different transmission modes.

\appendices
\section*{Appendix~A: Proof of Theorem \ref{theorem_outage:2}} \label{Appendix:A}
\renewcommand{\theequation}{A.\arabic{equation}}
\setcounter{equation}{0}

Based on \eqref{express OP for D2}, the outage probability of $D_{2}$ can be expressed as
\begin{align}\label{D2 derivd OP without direct}
 P_{{D_2},dir}^{FD} =& \underbrace {{{\rm{P}}_{\rm{r}}}\left( {\gamma _{{D_2}}^{MRC} < {\gamma _{t{h_2}}^{FD}}} \right)}_{{J_{11}}}\underbrace {{{\rm{P}}_{\rm{r}}}\left( {{\gamma _{{D_2} \to {D_1}}} > {\gamma _{t{h_2}}^{FD}}} \right)}_{{J_{12}}} \nonumber \\
  &+ \underbrace {{{\rm{P}}_{\rm{r}}}\left( {{\gamma _{1,{D_2}}} < {\gamma _{t{h_2}}^{FD}},{\gamma _{{D_2} \to {D_1}}} < {\gamma _{t{h_2}}^{FD}}} \right)}_{{J_{13}}}.
\end{align}

Furthermore, substituting \eqref{SINR for D2 to detect D1}, \eqref{SINR for 1D2} and \eqref{MRC SINR for D2} to \eqref{D2 derivd OP without direct}, ${J_{11}}$ and ${J_{12}}$ can be calculated as
\begin{align}\label{J3_1}
 {J_{11}} =& {{\rm{P}}_{\rm{r}}}\left( {{{\left| {{h_2}} \right|}^2} < \frac{{{\gamma _{t{h_2}}^{FD}}}}{\rho } - \frac{{{{\left| {{h_0}} \right|}^2}{a _2}}}{{{{\left| {{h_0}} \right|}^2}{a _1}\rho  + 1}},{{\left| {{h_0}} \right|}^2} < \tau_{1} } \right)\nonumber \\
  =& \int_0^{\tau_{1}}  {{{\rm{F}}_{{{\left| {{h_2}} \right|}^2}}}\left( {\frac{{{\gamma _{t{h_2}}^{FD}}}}{\rho } - \frac{{y{a _2}}}{{y{a _1}\rho  + 1}}} \right)} {f_{{{\left| {{h_0}} \right|}^2}}}\left( y \right)dy \nonumber \\
  =& 1 - {e^{ - \frac{\tau_{1} }{{{\Omega _0}}}}} - \underbrace {\int_0^{\tau_{1}}  {\frac{1}{{{\Omega _0}}}{e^{ - \frac{y}{{{\Omega _0}}}}}{e^{ - \frac{1}{{{\Omega _2}}}\left( {\frac{{{\gamma _{t{h_2}}^{FD}}}}{\rho } - \frac{{y{a _2}}}{{y{a _1}\rho  + 1}}} \right)}}} dy}_{{\Theta _1}}.
\end{align}

Based on \eqref{J3_1}, using $x = y\rho {a _1} + 1$, ${\Theta _1}$ can be calculated as
\begin{align}\label{Theta1}
 {\Theta _1} =& \frac{1}{{{\Omega _0}}}{e^{ - \frac{{{\gamma _{t{h_2}}^{FD}}}}{{\rho {\Omega _2}}}}}\int_1^{\tau_{1} \rho {a _1} + 1} {{e^{ - \frac{{x - 1}}{{\rho {a _1}{\Omega _0}}}}}{e^{\frac{{{a _2}\left( {x - 1} \right)}}{{{a _1}\rho {\Omega _2}x}}}}} dx \nonumber \\
  =& \frac{{{e^\varphi }}}{{{\phi _2}}}\sum\limits_{n = 0}^\infty  {\frac{{{{\left( { - 1} \right)}^n}}}{{n!{{\left( {{\Omega _0}\rho {a _1}} \right)}^n}}}} \underbrace {\int_1^{\tau_{1} \rho {a _1} + 1} {{x^n}{e^{ - \frac{x}{{{a _1}\rho {\Omega _2}x}}}}} dx}_{{\Theta _2}},
  \end{align}
where $\varphi  = \frac{1}{{\rho {a _1}{\Omega _0}}} - \frac{{{\gamma _{t{h_2}}^{FD}}}}{{\rho {\Omega _2}}} - {\phi _1}$, ${\phi _1} = \frac{{ - {a _2}}}{{{a _1}\rho {\Omega _2}}}$ and ${\phi _2} = {a _1}\rho {\Omega _0}$. Note that \eqref{Theta1} is obtained by using Binomial theorem.

Furthermore, using $z = \frac{1}{x}$, ${{\Theta _2}}$ is given by
\begin{align} \label{Theta2}
 {\Theta _2} =& \int_{\frac{1}{{\tau_{1} \rho {a _1} + 1}}}^1 {\frac{1}{{{z^{n + 2}}}}{e^{ - \frac{{{a _2}z}}{{{a _1}\rho {\Omega _2}}}}}} dz \nonumber \\
  =&\frac{{{{\left( { - 1} \right)}^{2n + 1}}\phi _1^{^{n + 1}}}}{{(n + 1)!}}\left( {{\mathop{\rm Ei}\nolimits} \left( \psi  \right)} \right.\left. { - {\mathop{\rm Ei}\nolimits} \left( {{\phi _1}} \right)} \right) \nonumber \\
  &+ \sum\limits_{k = 0}^n  {\frac{{{{\left( {1 + {a _1}\rho \tau_{1} } \right)}^{n + 1}}{e^\psi }{\psi ^k} - {e^\phi }\phi _1^k}}{{\left( {n + 1} \right)n \cdots \left( {n + 1 - k} \right)}}}  ,
\end{align}
where \eqref{Theta2} can be obtained by using \cite[Eq. (3.351.4)]{gradshteyn}.

Substituting \eqref{Theta1} into \eqref{J3_1}, ${J_{11}}$ is written as
\begin{align}\label{J4_2}
 {J_{11}} =& \left\{ {1 - {e^{ - \frac{\tau_{1} }{{{\Omega _0}}}}} - \sum\limits_{n = 0}^\infty  {\frac{{{{\left( { - 1} \right)}^n}{e^\varphi }}}{{n!{\phi _2}^{n + 1}}}} } \right.\left[ {\frac{{{{\left( { - 1} \right)}^{2n + 1}}{\phi _1}^{n + 1}}}{{(n + 1)!}}\left( {{\mathop{\rm Ei}\nolimits} \left( \psi  \right)} \right.} \right. \nonumber \\
 &\left. {\left. {\left. { - {\mathop{\rm Ei}\nolimits} \left( {{\phi _1}} \right)} \right) + \sum\limits_{k = 0}^n  {\frac{{{{\left( {1 + {a _1}\rho \tau_{1} } \right)}^{n + 1}}{e^\psi }{\psi ^k} - {e^{{\phi _1}}}{\phi _1}^k}}{{\left( {n + 1} \right)n \cdots \left( {n + 1 - k} \right)}}} } \right]} \right\} .
\end{align}

After some algebraic manipulations, ${J_{12}}$ is calculated as
\begin{align}\label{J5}
 {J_{12}} =& \int_0^\infty  {\frac{1}{{{\Omega _{{\rm{LI}}}}}}{e^{ - \frac{y}{{{\Omega _{{\rm{LI}}}}}}}}\int_{\tau_{1} \left( {y\rho  + 1} \right)}^\infty  {\frac{1}{{{\Omega _1}}}{e^{ - \frac{x}{{{\Omega _1}}}}}} } dxdy \nonumber\\
  =& \chi {e^{ - \frac{\tau_{1} }{{{\Omega _1}}}}},
\end{align}
where $\chi  = \frac{{{\Omega _1}}}{{{\Omega _1} + \tau_{1} \rho {\Omega _{{\rm{LI}}}}}}$.

Similarly, ${J_{13}}$ is given by
\begin{align} \label{J3}
{J_{13}} =& \left( {1 - {e^{ - \frac{\tau_{1} }{{{\Omega _0}}}}}} \right)\left( {1 - \chi {e^{ - \frac{\tau_{1} }{{{\Omega _1}}}}}} \right).
\end{align}

Combining \eqref{J4_2}, \eqref{J5} and \eqref{J3}, we can obtain \eqref{OP derived for D2}.

The proof is completed.

\section*{Appendix~B: Proof of Theorem \ref{theorem_ergodic_rate_D1:4}} \label{Appendix:B}
\renewcommand{\theequation}{B.\arabic{equation}}
\setcounter{equation}{0}
To obtain \eqref{ergodic rate of D1 without directlink for FD}, the ergodic rate of $D_{1}$ for FD NOMA is expressed as
\begin{align}\label{proof ergodic rate for D1}
R_{{D_1}}^{FD} =& \mathbb{E} \left[ {\log \left( {1 + \underbrace {\frac{{{{\left| {{h_1}} \right|}^2}{a _1}\rho }}{{\varpi {{\left| {{h_{LI}}} \right|}^2}\rho  + 1}}}_X} \right)} \right] \nonumber  \\
 =& \frac{1}{{\ln 2}}\int_0^\infty  {\frac{{1 - {F_X}\left( x \right)}}{{1 + x}}} dx
\end{align}
where $\varpi =1$.

The CDF of $X$ is calculated as follows
\begin{align}\label{the CDF of D1 for the ergodic rate}
{F_X}\left( x \right) =& {{\mathop{\rm P}\nolimits} _{\mathop{\rm r}\nolimits} }\left( {{{\left| {{h_1}} \right|}^2} < \frac{{x\left( {{{\left| {{h_{LI}}} \right|}^2}\rho  + 1} \right)}}{{{a_1}\rho }}} \right) \nonumber \\
  =& \int_0^\infty  {\frac{1}{{{\Omega _{LI}}}}{e^{ - \frac{z}{{{\Omega _{LI}}}}}}} \int_0^{\frac{{x\left( {z\rho  + 1} \right)}}{{{a_1}\rho }}} {\frac{1}{{{\Omega _1}}}{e^{ - \frac{y}{{{\Omega _1}}}}}} dydz \nonumber \\
  =&1 - \frac{{{a_1}{\Omega _1}}}{{x{\Omega _{LI}} + {a_1}{\Omega _1}}}{e^{ - \frac{x}{{{a_1}\rho {\Omega _1}}}}}.
\end{align}

Substituting \eqref{the CDF of D1 for the ergodic rate} into \eqref{proof ergodic rate for D1}, the ergodic rate of $D_{1}$ is written as
\begin{align}\label{the further expression of ergodic rate for D1}
 R_{{D_1}}^{FD} =& \frac{1}{{\ln 2}}\int_0^\infty  {\frac{1}{{1 + x}}} \frac{{{a_1}{\Omega _1}}}{{{a_1}{\Omega _1} + x{\Omega _{LI}}}}{e^{ - \frac{x}{{{a_1}\rho {\Omega _1}}}}}dx \nonumber\\
  =& \frac{1}{{\ln 2}}\underbrace {\int_0^\infty  {\frac{{ - {a_1}{\Omega _1}{e^{ - \frac{x}{{{a_1}\rho {\Omega _1}}}}}}}{{\left( {1 + x} \right)\left( {{\Omega _{LI}} - {a_1}{\Omega _1}} \right)}}} dx}_{{J_1}}\nonumber \\
  & + \frac{1}{{\ln 2}}\underbrace {\int_0^\infty  {\frac{{{a_1}{\Omega _1}{\Omega _{LI}}{e^{ - \frac{x}{{{a_1}\rho {\Omega _1}}}}}}}{{\left( {{a_1}{\Omega _1} + x{\Omega _{LI}}} \right)\left( {{\Omega _{LI}} - {a_1}{\Omega _1}} \right)}}} dx}_{{J_2}}.
\end{align}

Based on \cite[Eq. (3.352.4)]{gradshteyn} and applying some polynomial expansion manipulations, $J_{1}$ and $J_{2}$ are given by
\begin{align}\label{the further expression of ergodic rate for D1 J1}
{J_1} = \frac{{{a_1}{\Omega _1}{e^{\frac{1}{{{a_1}\rho {\Omega _1}}}}}}}{{{\Omega _{LI}} - {a_1}{\Omega _1}}}{{\mathop{\rm E}\nolimits} _{\mathop{\rm i}\nolimits} }\left( {\frac{{ - 1}}{{{a_1}\rho {\Omega _1}}}} \right),
\end{align}
and
\begin{align}\label{the further expression of ergodic rate for D1 J2}
{J_2} = \frac{{{a_1}{\Omega _1}{e^{\frac{1}{{\rho {\Omega _{LI}}}}}}}}{{{\Omega _{LI}} - {a _1}{\Omega _1}}}{{\mathop{\rm E}\nolimits} _{\mathop{\rm i}\nolimits} }\left( {\frac{{ - 1}}{{\rho {\Omega _{LI}}}}} \right).
\end{align}

Substituting \eqref{the further expression of ergodic rate for D1 J1} and \eqref{the further expression of ergodic rate for D1 J2} into \eqref{the further expression of ergodic rate for D1}, we can obtain \eqref{ergodic rate of D1 without directlink for FD}.

The proof is completed.

\section*{Appendix~C: Proof of Theorem \ref{theorem_ergodic_rate_D2:4}} \label{Appendix:C}
\renewcommand{\theequation}{C.\arabic{equation}}
\setcounter{equation}{0}

The proof starts by providing the ergodic rate of $D_2$ as follows:
\begin{align}\label{proof ergodic rate for D2}
R_{{D_2},nodir}^{FD} =\mathbb{E}\left[ {\log \left( {1 + \underbrace {\min \left( {{\gamma _{{D_2} \to {D_1}}},{\gamma _{2,{D_2}}}} \right)}_{{J_1}}} \right)} \right],
\end{align}
where $\varpi  = 1$. We focus on the high SNR approximation of $J_{1}$, which is given by
\begin{align}\label{approximation J1}
{J_1} \approx \underbrace {\min \left( {\frac{{{{\left| {{h_1}} \right|}^2}{a _2}}}{{{{\left| {{h_1}} \right|}^2}{a _1} + {{\left| {{h_{LI}}} \right|}^2}}},{{\left| {{h_2}} \right|}^2}\rho} \right)}_Y.
\end{align}
The CDF of $Y$ is expressed as
\begin{align}\label{CDF of Y without direct link}
 {{\mathop{\rm F}\nolimits} _Y}\left( y \right) =& 1 - \underbrace {{{\rm{P}}_{\rm{r}}}\left( {{{\left| {{h_2}} \right|}^2}\rho > y} \right)}_{{\Theta _1}}  \underbrace {{{\rm{P}}_{\rm{r}}}\left( {\frac{{{{\left| {{h_1}} \right|}^2}{a _2}}}{{{{\left| {{h_1}} \right|}^2}{a _1} + {{\left| {{h_{LI}}} \right|}^2}}} > y} \right)}_{{\Theta _2}}.
\end{align}

${\Theta _1}$ and ${\Theta _2}$ are given by
\begin{align}\label{Theta1 for ergodic rate without direct}
{\Theta _1} =1- {e^{ - \frac{y}{{\rho {\Omega _2}}}}}{\mathop{\rm U}\nolimits} \left( y \right),
\end{align}
and
\begin{align}\label{Theta2 for ergodic rate without direct}
{\Theta _2} = \frac{{\left( {{a_2} - y{a_1}} \right){\Omega _1}}}{{\left( {{a_2} - y{a_1}} \right){\Omega _1} + y{\Omega _{LI}}}}{\mathop{\rm U}\nolimits} \left( {\frac{{{a_2}}}{{{a_1}}} - y} \right),
\end{align}
respectively, where ${\mathop{\rm U}\nolimits} \left( x \right)$ is unit step function as\\ ${\mathop{\rm U}\nolimits} \left( x \right) = \left\{ {\begin{array}{*{20}{c}}
   {1\begin{array}{*{20}{c}}
   {,x > 0}  \\
\end{array}}  \\
   {0\begin{array}{*{20}{c}}
   {,x < 0}  \\
\end{array}}  \\
\end{array}} \right.$.

Substituting \eqref{Theta1 for ergodic rate without direct} and \eqref{Theta2 for ergodic rate without direct} into \eqref{CDF of Y without direct link}, the CDF of $Y$ is given by
\begin{align} \label{appro CDF of Y without direct link}
{{\mathop{\rm F}\nolimits} _Y}\left( y \right) = 1 - \frac{{{e^{ - \frac{y}{{\rho {\Omega _2}}}}}\left( {{a_2} - y{a_1}} \right){\Omega _1}}}{{\left( {{a_2} - y{a_1}} \right){\Omega _1} + y{\Omega _{LI}}}}{\mathop{\rm U}\nolimits} \left( y \right){\mathop{\rm U}\nolimits} \left( {\frac{{{a_2}}}{{{a_1}}} - y} \right).
\end{align}

Base on \eqref{appro CDF of Y without direct link}, a high SNR approximation of the ergodic rate for $D_{2}$ is written as
\begin{align}\label{temp appro ergodic rate for D2 without direct link}
 R_{{D_2},nodir}^{FD,\infty }
  =& \frac{1}{{\ln 2}}\int_0^{\frac{{{a _2}}}{{{a _1}}}} {\frac{1}{{1 + y}}\frac{{{e^{ - \frac{y}{{\rho {\Omega _2}}}}}\left( {{a _2} - y{a _1}} \right){\Omega _1}}}{{\left( {{a _2} - y{a _1}} \right){\Omega _1} + y{\Omega _{LI}}}}} dy\nonumber  \\
  =& \frac{1}{{\ln 2}}\left( {\underbrace {\int_0^{\frac{{{a _2}}}{{{a _1}}}} {\frac{1}{{1 + y}}\frac{{{a _2}{\Omega _1}{e^{ - \frac{y}{{\rho {\Omega _2}}}}}}}{{y\xi  + {a _2}{\Omega _1}}}} dy}_{{J_2}}} \right. \nonumber \\
 &\left. { - \underbrace {\int_0^{\frac{{{a _2}}}{{{a _1}}}} {\frac{1}{{1 + y}}\frac{{y{a _1}{\Omega _1}{e^{ - \frac{y}{{\rho {\Omega _2}}}}}}}{{y\xi  + {a _2}{\Omega _1}}}} dy}_{{J_3}}} \right),
\end{align}
where $\xi  = \left( {{\Omega _{LI}} - {a _1}{\Omega _1}} \right)$.

Applying \cite[Eq. (3.352.1)]{gradshteyn} and some polynomial expansion manipulations, ${{J_2}}$ and ${{J_3}}$ can be calculated as
\begin{align}\label{appro J2 for ergodic without direct link}
 {J_2} &= \frac{{{a _2}{\Omega _1}}}{{{a _2}{\Omega _1} - \xi }}\left( {\int_0^{\frac{{{a _2}}}{{{a _1}}}} {\frac{{{e^{ - \frac{y}{{\rho {\Omega _2}}}}}}}{{1 + y}}dy - \int_0^{\frac{{{a _2}}}{{{a _1}}}} {\frac{{\xi {e^{ - \frac{y}{{\rho {\Omega _2}}}}}}}{{\xi y + {a _2}{\Omega _1}}}} } dy} \right) \nonumber \\
  &= \frac{{{a _2}{\Omega _1}}}{{{a _2}{\Omega _1} - \xi }}\left\{ {{e^{\frac{1}{{\rho {\Omega _2}}}}}\left[ {{\mathop{\rm Ei}\nolimits} \left( {\frac{{ - 1}}{{\rho {a _1}{\Omega _2}}}} \right) - {\mathop{\rm Ei}\nolimits} \left( {\frac{{ - 1}}{{\rho {\Omega _2}}}} \right)} \right]} \right.  \nonumber \\
  &- {e^{\frac{{{a _2}{\Omega _1}}}{{\rho {\Omega _2}\xi }}}}\left[ {{\mathop{\rm Ei}\nolimits} \left( { - \frac{{{a _2}\xi  + {a _1}{a _2}{\Omega _1}}}{{\rho {a _1}\xi {\Omega _2}}}} \right)} \right.\left. {\left. { - {\mathop{\rm Ei}\nolimits} \left( { - \frac{{{a _2}{\Omega _1}}}{{\rho {\Omega _2}\xi }}} \right)} \right]} \right\} .
\end{align}
\begin{align}\label{appro J3 for ergodic without direct link}
 {J_3} &= \frac{{{a _1}{\Omega _1}}}{{\xi  - {a _2}{\Omega _1}}}\left( {\int_0^{\frac{{{a _2}}}{{{a _1}}}} {\frac{{{e^{ - \frac{y}{{\rho {\Omega _2}}}}}}}{{1 + y}}dy - \int_0^{\frac{{{a _2}}}{{{a _1}}}} {\frac{{{a _2}{\Omega _1}{e^{ - \frac{y}{{\rho {\Omega _2}}}}}}}{{\xi y + {a _2}{\Omega _1}}}} } dy} \right) \nonumber\\
  &= \frac{{{a _1}{\Omega _1}}}{{\xi  - {a _2}{\Omega _1}}}\left\{ {{e^{\frac{1}{{\rho {\Omega _2}}}}}\left[ {{\mathop{\rm Ei}\nolimits} \left( {\frac{{ - 1}}{{\rho {a _1}{\Omega _2}}}} \right) - {\mathop{\rm Ei}\nolimits} \left( {\frac{{ - 1}}{{\rho {\Omega _2}}}} \right)} \right]} \right.  \nonumber\\
 &\left. { - \frac{{{a _2}{\Omega _1}{e^{\frac{{{a _2}{\Omega _1}}}{{\rho {\Omega _2}\xi }}}}}}{\xi }\left[ {{\mathop{\rm Ei}\nolimits} \left( { - \frac{{{a _2}\xi  + {a _1}{a _2}{\Omega _1}}}{{\rho {a _1}\xi {\Omega _2}}}} \right) - {\mathop{\rm Ei}\nolimits} \left( { - \frac{{{a _2}{\Omega _1}}}{{\rho {\Omega _2}\xi }}} \right)} \right]} \right\}.
\end{align}

Substituting \eqref{appro J2 for ergodic without direct link} and \eqref{appro J3 for ergodic without direct link} into \eqref{temp appro ergodic rate for D2 without direct link}, we can obtain \eqref{appro ergodic rate of D2 without directlink for FD}.

The proof is completed.

\section*{Appendix~D: Proof of Corollary \ref{high_SNR_sum_ergodic_rate_nodirectlink:4}} \label{Appendix:D}
\renewcommand{\theequation}{D.\arabic{equation}}
\setcounter{equation}{0}
We can rewrite \eqref{ergodic rate of D2 without directlink for HD} as follows:
\begin{align}
R_{{D_2},nodir}^{HD} = \frac{1}{2}\mathbb{E}\left[ {\log \left( {1 + \underbrace {\min \left( {{\gamma _{{D_2} \to {D_1}}},{\gamma _{2,{D_2}}}} \right)}_{{J_1}}} \right)} \right],
\end{align}
where $\varpi = 0$.

At the high SNR region, $J_{1}$ is approximated as
\begin{align}
{J_1} \approx \underbrace {\min \left( {\frac{{{a _2}}}{{{a _1}}},{{\left| {{h_2}} \right|}^2}\rho } \right)}_Y.
\end{align}
Therefore, ${{\mathop{\rm F}\nolimits} _Y}\left( y \right) = \left( {1 - {e^{ - \frac{y}{{\rho {\Omega _2}}}}}} \right){\mathop{\rm U}\nolimits} \left( {\frac{{{a_2}}}{{{a_1}}} - y} \right)$ can be easily obtained. As such, the approximated ergodic rate of $D_{2}$ for HD NOMA at high SNR is given in \eqref{appro ergodic rate of D2 without directlink for HD}.

\section*{Appendix~E: Proof of Theorem \ref{theorem:6 direct link rate for D2}} \label{Appendix:E}
\renewcommand{\theequation}{E.\arabic{equation}}
\setcounter{equation}{0}
The proof starts by providing the ergodic rate of $D_2$ as follows:
\begin{align}\label{proof ergodic rate for D2}
R_{{D_2},dir}^{FD} = \mathbb{E}\left[ {\log \left( {1 + \underbrace {\min \left( {{\gamma _{{D_2} \to {D_1}}},\gamma _{{D_2}}^{MRC}} \right)}_{{J_1}}} \right)} \right],
\end{align}
where $\varpi  = 1$.
We focus on the high SNR approximation of $J_{1}$, which is given by
\begin{align}\label{approximation J1}
{J_1} \approx \underbrace {\min \left( {\frac{{{{\left| {{h_1}} \right|}^2}{a _2}}}{{{{\left| {{h_1}} \right|}^2}{a _1} + {{\left| {{h_{LI}}} \right|}^2}}},{{\left| {{h_2}} \right|}^2}\rho  + \frac{{{a _2}}}{{{a _1}}}} \right)}_Y.
\end{align}
The cumulative distribution function (CDF) of $Y$ is expressed as
\begin{align}\label{CDF of Y}
 {{\mathop{\rm F}\nolimits} _Y}\left( y \right) =& 1 - \underbrace {{{\rm{P}}_{\rm{r}}}\left( {{{\left| {{h_2}} \right|}^2}\rho  + \frac{{{a _2}}}{{{a _1}}} \ge y} \right)}_{{\Theta _1}} \nonumber \\
  &\times \underbrace {{{\rm{P}}_{\rm{r}}}\left( {\frac{{{{\left| {{h_1}} \right|}^2}{a _2}}}{{{{\left| {{h_1}} \right|}^2}{a _1} + {{\left| {{h_{LI}}} \right|}^2}}} \ge y} \right)}_{{\Theta _2}}.
\end{align}

${{\Theta _1}}$ and ${{\Theta _2}}$ are given by
\begin{align}\label{Theta1 for ergodic rate}
{\Theta _1} = 1 - {\mathop{\rm U}\nolimits} \left( {y - \frac{{{a _2}}}{{{a _1}}}} \right){e^{ - \frac{1}{{\rho {\Omega _2}}}\left( {y - \frac{{{a _2}}}{{{a _1}}}} \right)}},
\end{align}
and
\begin{align}\label{Theta2 for ergodic rate}
 {{\Theta _2}}=& {\mathop{\rm U}\nolimits} \left( {\frac{{{a _2}}}{{{a _1}}} - y} \right)\frac{{\left( {{a _2} - y{a _1}} \right){\Omega _1}}}{{\left( {{a _2} - y{a _1}} \right){\Omega _1} + y{\Omega _{LI}}}},
\end{align}
respectively.

Substituting \eqref{Theta1 for ergodic rate} and \eqref{Theta2 for ergodic rate} into \eqref{CDF of Y}, the CDF of $Y$ is given by
\begin{align} \label{appro CDF of Y}
{F_Y}\left( y \right) = 1 - {\mathop{\rm U}\nolimits} \left( {\frac{{{a _2}}}{{{a _1}}} - y} \right)\frac{{\left( {{a _2} - y{a _1}} \right){\Omega _1}}}{{\left( {{a _2} - y{a _1}} \right){\Omega _1} + y{\Omega _{LI}}}}.
\end{align}

Base on \eqref{appro CDF of Y}, a high SNR approximation of the ergodic rate for $D_{2}$ is written as
\begin{align}\label{temp appro ergodic rate for D2}
 R_{{D_2},dir}^{FD,\infty } 
  =& \frac{1}{{\ln 2}}\int_0^{\frac{{{a _2}}}{{{a _1}}}} {\frac{1}{{1 + y}}\frac{{\left( {{a _2} - y{a _1}} \right){\Omega _1}}}{{\left( {{a _2} - y{a _1}} \right){\Omega _1} + y{\Omega _{LI}}}}} dy \nonumber\\
  =& \frac{1}{{\ln 2}}\left( {\underbrace {\int_0^{\frac{{{a _2}}}{{{a _1}}}} {\frac{{{a _2}{\Omega _1}}}{{\left( {1 + y} \right)\left( {y\xi  + {a _2}{\Omega _1}} \right)}}} dy}_{{J_2}}} \right. \nonumber\\
& \left. { - \underbrace {\int_0^{\frac{{{a _2}}}{{{a _1}}}} {\frac{{y{a _1}{\Omega _1}}}{{\left( {1 + y} \right)\left( {y\xi  + {a _2}{\Omega _1}} \right)}}} }_{{J_3}}{\rm{dy}}} \right),
\end{align}
where $\xi  = \left( {{\Omega _{LI}} - {a _1}{\Omega _1}} \right)$.

After some algebraic manipulations, ${{J_2}}$ and ${{J_3}}$ are obtained as follows:
\begin{align}\label{appro J2 for ergodic}
 {J_2} 
  =& \frac{{{a _2}{\Omega _1}}}{{{a _2}{\Omega _1} - \xi }}\left[ {\ln \left( {1 + \frac{{{a _2}}}{{{a _1}}}} \right) - \ln \left( {1 + \frac{\xi }{{{a _1}{\Omega _1}}}} \right)} \right].
\end{align}
\begin{align}\label{appro J3 for ergodic}
 {J_3} 
  =& \frac{{{a _1}{\Omega _1}}}{{{a _2}{\Omega _1} - \xi }}\left[ {\frac{{{a _2}{\Omega _1}}}{\xi }\ln \left( {1 + \frac{\xi }{{{a _1}{\Omega _1}}}} \right) - \ln \left( {1 + \frac{{{a _2}}}{{{a _1}}}} \right)} \right].
\end{align}

Substituting \eqref{appro J2 for ergodic} and \eqref{appro J3 for ergodic} into \eqref{temp appro ergodic rate for D2}, we can obtain \eqref{appro ergodic rate of D2 with directlink for FD}.

The proof is completed.
\bibliographystyle{IEEEtran}
\bibliography{mybib}

\end{document}